\long\def\killtext#1{}
\newtheorem{theorem}{Theorem}[section]
\newtheorem{lemma}[theorem]{Lemma}
\newtheorem{corollary}[theorem]{Corollary}
\newtheorem{exa}[theorem]{Example}
\newtheorem{problem}{Problem}[section]
\newenvironment{proof}{\noindent{\bf Proof.}}{\hfill$\square$\medskip}
\newenvironment{proof*}[1]{\noindent{\bf Proof of #1.}}{\hfill$\square$\medskip}
\newcommand\numberthis{\addtocounter{equation}{1}\tag{\theequation}}
\def\Pr{{\sf P}}\def\E{{\sf E}}
\def\eps{\varepsilon}
\def\R{\sf R}
\def\N{{\cal N}}
\def\vol{{\sf vol}}
\def\iso{\ln(2)}
\def\k{\frac{1}{e}}
\begin{document}

\title{\Large\bf A Cubic Algorithm for Computing Gaussian Volume}

\author{Ben Cousins\thanks{Georgia Tech. Email:{\tt bcousins3@gatech.edu}}
\and
Santosh Vempala\thanks{Georgia Tech. Email: {\tt vempala@gatech.edu}}
}
\date{}
\maketitle
\thispagestyle{empty}

\begin{abstract}
We present randomized algorithms for sampling the standard Gaussian distribution restricted to a convex set and for estimating the Gaussian measure of a convex set,  in the general membership oracle model. The complexity of the integration algorithm is $O^*(n^3)$ while the complexity of the sampling algorithm is $O^*(n^{3})$ for the first sample and $O^*(n^2)$ for every subsequent sample. These bounds improve on the corresponding state-of-the-art by a factor of $n$. Our improvement comes from several aspects: better isoperimetry, smoother annealing, avoiding transformation to isotropic position and the use of the ``speedy walk" in the analysis. 
\end{abstract}

\newpage

\setcounter{page}{1}

\section{Introduction}

The study of high-dimensional sampling and integration has yielded many mathematical and algorithmic insights and tools that are now used in computer science, machine learning, statistics, operations research and other fields. Starting with the breakthrough of Dyer, Frieze and Kannan \cite{DyerFK89,DyerFK91}, who gave an $O^*(n^{23})$ randomized algorithm\footnote{The $O^*$ notation suppresses logarithmic factors and error parameters.} for estimating the volume of a convex body, a series of improvements and extensions were made over subsequent decades \cite{LS90, ApplegateK91, DyerF90, LS92, LS93, KLS97, LV2, LV07, LV06, CDV10}. For detailed accounts of these developments, see \cite{VemSurvey, Simonovits03}.

The current best complexity of integration is $O^*(n^4)$ for any logconcave function, in the general oracle model where the function of interest is accessible via an oracle that returns its value for any point in $\R^n$. Up to logarithmic factors, the state-of-the-art bounds for the important special cases of computing the volume (constant function on a convex body) and for Gaussian volume (Gaussian restricted to a convex body) are the same as for the general case. This complexity is achieved by the algorithm given in \cite{LV2003,LV2} (henceforth the ``LV algorithm") for convex bodies and extended to logconcave functions in \cite{LV06}. All known algorithms use the Markov chain Monte-Carlo approach, i.e., a reduction to sampling from a sequence of distributions. Thus, efficient sampling is at the core of fast volume algorithms and has also been intensively studied \cite{BubleyDJ98, KarzanovK91, KannanL96, LV3, KanNar2009}. The current best bound for sampling a convex body or logconcave density is $O^*(n^4)$ for the first sample and $O^*(n^3)$ for subsequent samples. In spite of being an active and well-known research topic, for the past decade, there has been no improvement in the worst-case complexity of sampling or volume computation. For Gaussian volume, there has been no improvement since the work of Kannan and Li \cite{KannanL96}. We note that while the upper bound has dropped remarkably from $O^*(n^{23})$ to $O^*(n^4)$, and lead to many techniques on the way,  it still appears to be outside the realm of being practical. 

Can the complexity of sampling and volume computation be further improved? There is an outstanding conjecture in this direction, namely the KLS hyperplane conjecture about the isoperimetric coefficient of logconcave functions \cite{KLS95}. Isoperimetric inequalities are a crucial ingredient in the analysis of random walks, which are the core of geometric sampling algorithms. Roughly speaking, such an inequality says that for any partition of space into two sets, the area of the separating surface is proportional to the volume (measure) of the smaller set. The ratio of the two cannot be too small for logconcave functions. The KLS conjecture says that if the corresponding logconcave density is in isotropic position (i.e., its covariance matrix is the identity), then this ratio is at least some absolute constant. The current best lower bound is $\Omega(n^{-1/3}\log^{-1/2} n)$ \cite{Eldan2013}. The recent work of Eldan \cite{Eldan2013}, Eldan and Klartag \cite{EldanK2011}, and earlier work of Ball \cite{Ball88, Ball09}, shows that the KLS conjecture is very closely (and quantitatively) connected to both the slicing conjecture (also called the hyperplane conjecture) and the thin-shell conjecture; these conjectures were formulated much earlier than KLS for entirely different reasons and are at the heart of asymptotic convex geometry.

The KLS conjecture, if true, implies that sampling an isotropic logconcave distribution using the ball walk from a warm start would take $O^*(n^2)$ steps instead of the current bound of $O^*(n^3)$. Since the LV algorithm uses only $O^*(n)$ samples, this suggests the possibility of reducing the complexity of integration by a factor of $n$ (from $O^*(n^4)$ to $O^*(n^3)$). However, besides proving the conjecture, there are other formidable hurdles. The conjecture and its implication for sampling hold only for {\em isotropic} or near-isotropic distributions, requiring that an algorithm maintain this property. This is possible with sampling, but expensive; each round of isotropization requires $\Omega(n)$ samples (and $O^*(n)$ samples suffice \cite{Rudelson1999, Adamczak2010}). The LV algorithm gets around this bottleneck by using isotropy only once, up front, and maintaining a more relaxed roundness condition during the course of the algorithm, thereby avoiding this overhead in every phase. One cannot hope to use this approach for a faster algorithm, as the KLS conjecture does not hold under the weaker roundness condition. A second important issue is that the ball walk, for which the KLS conjecture implies faster mixing, needs a warm start, i.e., the ratio of the starting density to the target density in any sampling phase is bounded for every point. In other words, the mixing time depends {\em polynomially} on how close the starting density is to the target. This is illustrated by the fact that starting the ball walk near a corner of a convex body could make the mixing very slow. As a result, recent improvements in isoperimetry towards the KLS conjecture do not have any direct impact on the complexity of sampling or volume computation. The LV algorithm resolves this issue by using {\em hit-and-run} \cite{Sm, LV3, LV06}, a random walk which mixes rapidly from any starting point. At the moment, there is no known analog of the KLS conjecture that would imply faster mixing for hit-and-run.

In this paper, we consider a well-motivated special case of logconcave integration, namely that of computing the Gaussian volume of a convex set. This is a natural setting of wide interest in probability, statistics and theoretical computer science. Integrating multivariate Gaussian distributions has been studied for many decades and has a wide variety of applications for statistics \cite{Martynov, Iyengar, kannanLi,Somerville, genzBretzBook}. It appears to be at least as important as the special case of computing the volume of convex body. As far as we know, there is no direct reduction between the two. We now state the Gaussian volume and sampling problems formally. We denote the Gaussian density function as $\gamma(x) = e^{-\|x\|^2/2}/(2\pi)^{n/2}$.

\begin{problem}\label{prob:volume}[Gaussian Volume]
Given a membership oracle for a convex set $K$ in $\R^n$ containing the unit ball $B_n$, and error parameter $\eps > 0$, give an algorithm that computes a number $V$ such that with probability at least $3/4$,
\[
(1-\eps) \int_K \gamma(x) \, dx \le V \le (1+\eps) \int_K \gamma(x) \, dx.
\]
\end{problem}

\begin{problem}\label{prob:sampling}[Gaussian Sampling]
Given a membership oracle for a convex set $K$ in $\R^n$ containing the unit ball $B_n$, and a paremeter $\eps > 0$, give an algorithm to generate a random point $x$ from $K$ whose density is within total variation distance $\eps$ of the standard Gaussian density restricted to $K$.
\end{problem}

Our main result is a cubic algorithm for Gaussian volume.

\begin{theorem}\label{thm:volume}
For any $\eps > 0$, $p>0$, and any convex set $K$ in $\R^n$ containing the unit ball, there is an algorithm that gives a $(1+\eps)$ approximation of the  
Gaussian volume of $K$ with probability $1-p$ and has complexity $O(n^3\log(n)\log^3(n/\eps)\log(1/p)/\eps^2)=O^*(n^3)$ in the membership oracle model.
\end{theorem}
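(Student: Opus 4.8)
The plan is to follow the standard multiphase Monte Carlo (annealing) schema, but instantiated with Gaussian ``temperatures'' rather than the usual $1/\beta$-scalings of the body, and to extract the factor-$n$ savings from each of the four ingredients advertised in the abstract. Concretely, I would write $\int_K \gamma(x)\,dx$ as a telescoping product
\begin{equation}
\int_K \gamma_{\sigma_0}(x)\,dx \cdot \prod_{i=1}^{m} \frac{\int_K \gamma_{\sigma_{i}}(x)\,dx}{\int_K \gamma_{\sigma_{i-1}}(x)\,dx},
\end{equation}
where $\gamma_\sigma$ denotes the Gaussian of variance $\sigma^2$ restricted to $K$, the first variance $\sigma_0^2$ is tiny enough (of order $1/n$) that $\int_K\gamma_{\sigma_0}$ is within $1\pm\eps$ of the Gaussian mass of $B_n$ and hence computable in closed form, and the last variance $\sigma_m^2=1$ is the target. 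Each ratio is estimated by drawing samples from $\gamma_{\sigma_{i-1}}$ restricted to $K$ and averaging the pointwise ratio of the densities, which is an unbiased estimator; the usual second-moment bound shows that if consecutive phases are close enough --- specifically if the ratio $\E_{\sigma_{i-1}}[(\gamma_{\sigma_i}/\gamma_{\sigma_{i-1}})^2]/(\E_{\sigma_{i-1}}[\gamma_{\sigma_i}/\gamma_{\sigma_{i-1}}])^2$ is $O(1)$ --- then $O^*(1)$ samples per phase suffice to control the variance, for a total of $O^*(m)$ samples.

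The first place a factor of $n$ is saved is in the number of phases $m$. Because we anneal by smoothly increasing the Gaussian variance from $\Theta(1/n)$ up to $1$, and the relative $\chi^2$-divergence between $\gamma_{\sigma}$ and $\gamma_{\sigma'}$ restricted to a convex set can be bounded in terms of $\sigma'/\sigma$ alone (this is the ``smoother annealing'' point: Gaussians interpolate much more gracefully than indicator functions of scaled bodies), one can take a geometric schedule $\sigma_i^2 = \sigma_0^2 (1+c)^i$ with a constant-factor step, giving $m = O^*(\sqrt n)$ phases rather than the $O^*(n)$ of the LV algorithm --- and in fact combined with the sampling cost this is what yields the net cubic bound. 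The second saving is in the per-sample cost of the random walk: each sample is produced by a ball walk (or hit-and-run) for the logconcave density $\gamma_{\sigma_{i-1}}\cdot\mathbf 1_K$, and the mixing-time analysis uses the sharper isoperimetric inequality for such restricted Gaussians together with the ``speedy walk'' device --- analyzing an idealized walk that never rejects, bounding its conductance, and then transferring the bound to the real walk by accounting for the rejection probability --- which removes the warm-start dependence cleanly and saves another factor. Crucially, since we anneal through Gaussians we never need to transform $K$ to isotropic position, eliminating both the up-front isotropy cost and the per-phase roundness maintenance.

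The key steps, in order, are: (i) fix the annealing schedule and show $\sigma_0^2 = \Theta(1/n)$ makes the base integral trivially computable to within $(1\pm\eps)$; (ii) prove the $\chi^2$-divergence bound between consecutive phases, establishing $m=O^*(\sqrt n)$ and $O^*(1)$ samples per phase; (iii) prove the isoperimetric inequality for a standard Gaussian (of any variance) restricted to a convex set, with the best possible coefficient; (iv) use (iii) to bound the conductance and hence the mixing time of the speedy ball walk from a reasonable (not warm) start to $O^*(n^2)$ steps, and argue that the distribution at phase $i-1$ is a good enough start for phase $i$ so that only $O^*(n^2)$ oracle calls are needed per sample, giving total work $m \cdot O^*(1) \cdot O^*(n^2) = O^*(n^{2.5})$ for sampling and, after collecting the $\eps^{-2}$ and $\log(1/p)$ factors and a more careful accounting, $O^*(n^3)$ for volume; (v) assemble the error and failure-probability budget across all phases via a union bound and Chebyshev/median-of-means. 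I expect the main obstacle to be step (iii)--(iv): getting an isoperimetric constant for the \emph{restricted} Gaussian that is strong enough, and — more delicately — making the conductance argument work \emph{without} a warm start and without isotropy, which is exactly the combination that has blocked progress before; the speedy-walk trick is what I would lean on hardest there, coupling it to the fact that consecutive Gaussian phases are automatically mutually warm in the relevant averaged (not pointwise) sense.
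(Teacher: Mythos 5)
Your high-level plan matches the paper (anneal through Gaussians from variance $\Theta(1/n)$ to $1$, estimate telescoping ratios by sampling with a ball walk, use the restricted-Gaussian isoperimetry plus the speedy walk, and skip isotropization), but there are two concrete gaps that would sink the complexity claim as you have set it up. First, your single-level phase schedule cannot serve both masters. To keep the second-moment ratio $\E[Y^2]/\E[Y]^2$ of the ratio estimator bounded by a constant you need consecutive \emph{volume} phases to differ by a variance factor of only $1+\Theta(1/\sqrt n)$ (a constant-factor geometric step, as you wrote, makes this ratio of order $\bigl(\tfrac{(1+c)^2}{1+2c}\bigr)^{n}$, exponentially large, and is also inconsistent with your own count $m=O^*(\sqrt n)$, since constant steps give $O(\log n)$ phases). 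But even with $1+\Theta(1/\sqrt n)$ steps, the pointwise $M$-warmness between consecutive targets is roughly $e^{\Theta(\sqrt n)}$, and the speedy walk does \emph{not} remove the warm-start requirement: it is only an analysis device, and the number of real ball-walk steps is bounded by (warmness)$\times$(speedy steps)$/\lambda$, so the mixing bound scales linearly in $M(Q_0,Q)$. The paper resolves this with a two-level structure you are missing: $O^*(n)$ fine sampling micro-phases with variance ratio $(1-1/n)^{-1}$ (each giving warmness $\le\sqrt e$, one maintained point per phase), with a volume phase only at every $\lfloor\sqrt n\rfloor$-th micro-phase; your claim that the conductance argument works ``without a warm start'' is exactly the step that fails for the Metropolis ball walk (that property is what hit-and-run would buy, and no suitable analysis of it is available here).

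Second, your sample-count accounting is wrong: $O^*(1)$ samples per phase does not control the error of the \emph{product} of $m=O^*(\sqrt n)$ ratios. Each phase contributes relative variance $\Theta(1/k)$, so the product has relative variance $\Theta(m/k)$, forcing $k=O^*(\sqrt n/\eps^2)$ samples per volume phase. With that correction the volume-phase cost is $O^*(\sqrt n)\times O^*(\sqrt n)\times O^*(n^2)=O^*(n^3)$, and the interleaved sampling micro-phases cost $O^*(n)\times O^*(n^2)=O^*(n^3)$ as well; your intermediate figure of $O^*(n^{2.5})$ is not attainable by this scheme and signals the miscount. Finally, note that the accuracy argument also has to handle the dependence between samples (each sample seeds the next walk), which the paper treats via $\mu$-independence bounds and truncation; a plain Chebyshev/union-bound budget over phases, as in your step (v), does not suffice without that.
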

The restriction on the Gaussian being centered at zero can be relaxed to being centered in a unit ball contained anywhere in the convex set. We also obtain an improved sampling algorithm.

\begin{theorem}\label{thm:sampling}
For any $\eps > 0$, $p>0$, and any convex set $K$ in $\R^n$ containing the unit ball, there is an algorithm that, with probability $1-p$, can generate random points from a density 
$\nu$ that is within total variation distance $\eps$ of the Gaussian density restricted to $K$. In the membership oracle model, the complexity of the first random point is $O(n^3\log(n)\log(n/\eps)\log(n/(\eps p)))=O^*(n^3)$ and is $O(n^2\log(1/\eps)\log(n/(\eps p)))=O^*(n^2)$ for subsequent random points; the set of random points will be $\eps$-independent.
\end{theorem}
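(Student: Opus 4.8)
The plan is to build the sampler on the standard Markov-chain annealing skeleton, reusing the machinery that underlies Theorem 1.2, but tracking the per-phase cost more carefully. First I would define a sequence of Gaussian densities $f_i(x) \propto e^{-\|x\|^2/(2\sigma_i^2)}$ restricted to $K$, interpolating between a ``flat'' starting density (large $\sigma_0$, essentially uniform on a ball the algorithm can sample directly, using that $B_n \subseteq K$) and $\sigma = 1$ at the end; the smooth-annealing schedule from the volume algorithm ensures consecutive densities are a bounded factor apart (say, their ratio of $L_2$-norms is $O(1)$), so a sample from $f_i$ is a warm start for $f_{i+1}$. The number of phases is $O^*(\sqrt n)$ with the smoother schedule. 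Within each phase I would run the ball walk (not hit-and-run), exploiting the improved isoperimetry advertised in the abstract to argue that from a warm start the ball walk mixes in $O^*(n^2)$ steps per phase; summing over $O^*(\sqrt n)$ phases this is already only $O^*(n^{2.5})$, so the $O^*(n^3)$ bound for the first point must come from the cost of \emph{establishing} the first warm start (the isoperimetric/conductance argument from a non-warm start, or the initial rounding-free cold start), which dominates. For subsequent points I would argue that once the chain for the final density $f$ has mixed, each additional sample costs only the mixing-from-warm-start time $O^*(n^2)$, and that running $k$ parallel or sequential copies yields $\eps$-independent samples by a coupling/total-variation argument.

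The key steps, in order: (1) specify the temperature schedule $\sigma_0 > \sigma_1 > \dots > \sigma_m = 1$ and prove $\|f_i/f_{i+1}\|$ is bounded, so one gets a warm start with polynomially bounded $L_2$-warmth at each phase transition; (2) invoke the new isoperimetric inequality for the relevant (non-isotropic, Gaussian-weighted) densities restricted to $K$ to lower-bound the conductance of the ball walk with an appropriate step size $\delta \sim 1/\sqrt n$ (times the relevant scale), giving a per-phase mixing bound; (3) handle the cold start: show how to obtain the first usable distribution (uniform on $B_n$ or a wide Gaussian), which is a warm start for $f_0$, and account for its cost — this is where the $n^3$ enters; (4) bound the accumulated total-variation error across phases by $\eps$ via a union bound, choosing per-phase mixing accuracy $\eps/m$, which contributes the $\log(n/\eps)$ factors; (5) for amortization, show the final-phase chain, once mixed, serves as a warm start for itself, so each further sample is $O^*(n^2)$; (6) prove $\eps$-independence of the output points, either by running independent chains each from the (already mixed) warm start, or by a decorrelation argument showing $O^*(n^2)$ steps between successive outputs suffice. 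Throughout, I would reuse the conductance-to-mixing lemma (e.g.\ via the $s$-conductance / Lov\'asz--Simonovits averaging argument) rather than reprove it.

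The main obstacle I anticipate is step (2): making the isoperimetry work \emph{without} transforming to isotropic position. The densities $f_i$ are Gaussians restricted to an arbitrary convex $K$ containing $B_n$; they are logconcave but can be far from isotropic, so one cannot directly cite an isotropic KLS-type bound, and — as the introduction stresses — the ball walk needs a genuine warm start, so the isoperimetric coefficient must be controlled in terms of quantities the algorithm actually has access to (the scale $\sigma_i$, the containment $B_n \subseteq K$, and perhaps the second moment of $f_i$, which the annealing schedule keeps comparable to $n\sigma_i^2$). The crux will be an isoperimetric inequality of the form: for the Gaussian density of variance $\sigma^2$ restricted to a convex set, the ball-walk-relevant isoperimetric coefficient is $\Omega(1/(\sigma\sqrt n))$ or better, uniformly over $K \supseteq B_n$ — strong enough to cancel the $n$ in the step-size bound and yield $O^*(n^2)$ mixing. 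A secondary difficulty is the cold-start analysis in step (3): bounding the conductance (and hence mixing) of the ball walk from a point near the center of $B_n$ for the widest Gaussian in the schedule, where one must argue the distribution is already ``warm enough'' for the first density; getting this to cost only $O^*(n^3)$ rather than more is the delicate part, and is presumably where the ``speedy walk'' device mentioned in the abstract is used in the analysis.
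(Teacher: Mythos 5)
Your skeleton (annealed Gaussians restricted to $K$, warm starts between phases, ball walk analyzed via a new isoperimetric inequality giving $O^*(n^2)$ mixing per phase, and cheap subsequent samples because the target is its own warm start) matches the paper in spirit, and your anticipated isoperimetric bound $\Omega(1/(\sigma\sqrt n))$ for a $\sigma$-Gaussian restricted to a convex set is exactly what Theorem \ref{thm:iso} and Theorem \ref{thm:speedyconductance} deliver. But two of your structural choices are genuine errors. First, the annealing direction and the treatment of the start: you begin with a \emph{wide} (``flat'') Gaussian and propose a separate cold-start conductance analysis as the dominant $O^*(n^3)$ cost. A wide Gaussian restricted to $K$ is essentially the hard target (close to uniform on $K$), and there is no cold-start argument available for the ball walk in this framework: its local conductance near corners of $K$ can be arbitrarily small, which is precisely why the entire construction is built to avoid ever running it from a non-warm start (the speedy walk is only an analysis device, and its wasted-step accounting in Lemma \ref{lem:ballsteps} itself multiplies by the warmness $M$). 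The paper instead starts with a \emph{concentrated} Gaussian, $\sigma_0^2 = \Theta(1/n)$, whose mass lies almost entirely inside the unit ball $B_n\subseteq K$, so the first point is obtained by trivial rejection sampling; the variance is then increased by a factor $(1-1/n)^{-1}$ per phase up to $\sigma=1$.

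Second, your phase count is wrong, and this is where your $n^3$ accounting breaks. You argue $O^*(\sqrt n)$ phases suffice because consecutive densities have bounded $L_2$-ratio; but the mixing bound (Theorem \ref{thm:mixingtime} plus the wasted-step bound) depends on the sup-ratio warmness $M(Q_0,Q)$, linearly through the wasted steps and logarithmically through the convergence bound, and with variance steps of relative size $1/\sqrt n$ one gets $M = \bigl((1-1/\sqrt n)^{-1}\bigr)^{n/2} = e^{\Theta(\sqrt n)}$, which destroys the bound. Bounded $L_2$-closeness (the variance-of-ratio condition of Lemma \ref{lem:bdd-var}) is only what the \emph{volume} phases need; keeping $M(Q_i,Q_{i+1})\le\sqrt e$ (Lemma \ref{lem:warmstart}) forces variance steps of relative size $1/n$, hence $O^*(n)$ sampling phases. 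The correct accounting for the first sample is therefore $O^*(n)$ warm-start phases $\times$ $O^*(n^2)$ ball-walk steps per phase $=O^*(n^3)$ --- not $O^*(n^{2.5})$ plus a dominant cold-start term. With that fixed, your steps (4)--(6) (per-phase accuracy $\nu$ small enough that errors and dependence accumulate to at most $\eps$, and $O^*(n^2)$ per subsequent sample from the final density as its own warm start) are consistent with the paper's proof; you would also need the paper's device of restricting each phase to $K\cap 4\sigma\sqrt n B_n$ so the Metropolis acceptance probability stays bounded below by a constant.
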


Both results improve on the previous best complexity of $O^*(n^4)$, obtained via the general result for logconcave sampling and integration  \cite{LV06}. In earlier work, Kannan and Li \cite{KannanL96} considered the special case of sampling a Gaussian restricted to the positive orthant and obtained an $n^4$ algorithm. The main ingredient of the sampling theorem is the following mixing bound. For two probability distributions $P$ and $Q$ with state space $K$, let $M(P,Q)$ to denote the $M$-warmness between $P$ and $Q$, a measure of their proximity:
\begin{equation}\label{eqn:m-warm}
M(P,Q) = \sup_{S \subseteq K} \frac{P(S)}{Q(S)}.
\end{equation}

\begin{theorem}\label{thm:mixing}
Let $Q_0$ be a starting distribution and $Q$ be the target Gaussian density $\mathcal{N}(0, \sigma^2 I)$ restricted to $K \cap 4\sigma\sqrt{n}B_n$ for $\sigma \le 1$ and a convex set $K$ containing the unit ball. For any $\eps > 0$, the lazy Metropolis ball walk with $\delta$-steps, for $\delta = \sigma/(4096\sqrt{n})$, starting from $Q_0$, satisfies $d_{tv}(Q_t,Q) \le \eps$ with probability $1-p$ after
\[
t \ge C  \cdot M(Q_0,Q)\cdot n^2\log(1/\eps)\log(1/p)
\]
steps for an absolute constant $C$.
\end{theorem}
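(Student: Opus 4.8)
The plan is to follow the classical conductance-based framework for bounding mixing time of Markov chains (as in Lov\'asz--Simonovits and Kannan--Lov\'asz--Simonovits), but to exploit the specific Gaussian structure of the target to get the sharper $n^2$ dependence rather than $n^3$. The three ingredients are: (i) a one-step coupling/overlap estimate showing that if two points $u,v$ are geometrically close (say $\|u-v\| \le t\delta/\sqrt n$ for a suitable $t$), then their one-step distributions under the lazy Metropolis ball walk are close in total variation; (ii) an isoperimetric inequality for the restricted Gaussian $Q$ on $K \cap 4\sigma\sqrt n B_n$, quantifying that a set and its complement must have a large ``boundary region'' relative to the measure of the smaller piece; and (iii) the standard combination of (i) and (ii) to lower bound the conductance of the chain, followed by the conductance-to-mixing implication together with the $M$-warmness hypothesis to convert a multiplicative decay of the Lov\'asz--Simonovits potential into the stated bound $t \ge C\, M(Q_0,Q)\, n^2 \log(1/\eps)\log(1/p)$.

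For step (i), I would first argue that the Metropolis filter is essentially harmless at the chosen step size. With $\delta = \sigma/(4096\sqrt n)$ and the target density proportional to $e^{-\|x\|^2/(2\sigma^2)}$ restricted to a ball of radius $4\sigma\sqrt n$, the log-density has gradient of norm at most $\|x\|/\sigma^2 \le 4\sqrt n/\sigma$ on the relevant region, so over a ball-walk step of length $\le \delta$ the density changes by at most a constant factor; hence the Metropolis acceptance probability is bounded below by an absolute constant for the ``interior'' part of $K$, and laziness absorbs the rest. This reduces the one-step comparison to the familiar statement that two $\delta$-balls whose centers are within $t\delta/\sqrt n$ have intersection volume at least $(1-c t)$ of each, so that $d_{tv}(P_u, P_v)$ is bounded away from $1$ whenever $\|u-v\|$ is a small enough multiple of $\delta/\sqrt n$; points where a large fraction of the step would exit $K$ are handled by restricting attention to a slightly shrunk region and charging the leftover measure to the error budget. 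I also need to rule out that the walk wanders far in norm; since $Q$ lives on $4\sigma\sqrt n B_n$ and $\delta \ll \sigma$, the chain stays inside this ball with overwhelming probability over the relevant horizon, and the Gaussian log-concavity keeps the density well-behaved throughout.

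For step (ii), the key point — and this is where the factor-$n$ improvement over the body case comes from — is that the localization/isoperimetry for a Gaussian of variance $\sigma^2$ restricted to a convex set has isoperimetric coefficient of order $1/\sigma$ (rather than $1/D$ with $D$ the diameter, which would be order $1/(\sigma\sqrt n)$ here). Concretely, for the partition $K = S_1 \cup S_2 \cup S_3$ induced by a set $S$ and the set of points within cross-ratio distance $\ge t\delta/\sqrt n$ from its complement, I would invoke a Gaussian isoperimetric inequality for logconcave restrictions (e.g.\ the needle/localization lemma applied to $\gamma$, or the known bound that the Cheeger constant of a $\sigma$-Gaussian on a convex set is $\gtrsim 1/\sigma$) to get $Q(S_3) \gtrsim \frac{\mathrm{dist}}{\sigma}\min\{Q(S_1),Q(S_2)\}$, i.e.\ $Q(S_3) \gtrsim \frac{t\delta}{\sigma\sqrt n}\min\{Q(S_1),Q(S_2)\} \asymp \frac{t}{n}\min\{Q(S_1),Q(S_2)\}$ after substituting $\delta = \sigma/(4096\sqrt n)$. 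This is exactly the $n$-fold gain: each spatial step of length $\sim \delta/\sqrt n$ buys isoperimetric weight $\sim 1/n$ instead of $\sim 1/n^{3/2}$.

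Combining, the conductance of the lazy Metropolis ball walk on the high-local-conductance part of $K$ is $\Omega(1/n)$ up to the measure thrown away into the error term, and by the standard argument (Lov\'asz--Simonovits potential function, or a direct $s$-conductance mixing bound) the total variation distance to $Q$ contracts by a constant factor every $O(n^2)$ steps once one accounts for the $M$-warm start: after $t = C\, M(Q_0,Q)\, n^2 \log(1/\eps)\log(1/p)$ steps we reach $d_{tv}(Q_t,Q)\le\eps$ with probability $1-p$. I expect the main obstacle to be the honest handling of the boundary effects in step (i): near $\partial K$ (and near the artificial sphere of radius $4\sigma\sqrt n$) the ball walk's one-step distribution has an atom at the current point and the local-conductance argument degrades, so one must either restrict the isoperimetric statement to a sub-region carrying all but an $\eps$-fraction of $Q$ and absorb the rest via the $s$-conductance formalism, or carry a ``most of $K$ is deep inside'' estimate — and making the constants in ``$4096$'' and ``$4\sigma\sqrt n$'' actually consistent with the $\Omega(1/n)$ conductance bound is the delicate bookkeeping. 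The Gaussian isoperimetry of step (ii), by contrast, I expect to be quotable from existing localization machinery with only minor adaptation.
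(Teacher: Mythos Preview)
Your overall framework is right---conductance via one-step overlap plus isoperimetry---and you correctly pinpoint that the Gaussian isoperimetric constant is $\Theta(1/\sigma)$ rather than $\Theta(1/\mathrm{diam}) = \Theta(1/(\sigma\sqrt n))$, which is exactly the source of the factor-$n$ saving. But your treatment of the boundary is a genuine gap, not ``delicate bookkeeping.''

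The direct $s$-conductance route you sketch does not deliver the stated bound. With $\delta=\sigma/(4096\sqrt n)$ the average-local-conductance lemma gives only $\lambda(f)\ge 1/2$; this says nothing about the \emph{measure} of the low-local-conductance region. To force that region to have $Q$-measure $\le s$ you would have to shrink $\delta$ to order $s^2\sigma/\sqrt n$, which drops the conductance to $\Omega(s^2/n)$ and inflates the mixing time to $\Omega(n^2/s^4)$. Since the $s$-conductance bound needs $s\lesssim \eps/M$, you end up with polynomial dependence on $M/\eps$, far worse than the linear $M$ in the theorem.

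The paper avoids this entirely by analyzing the \emph{speedy walk}: the auxiliary chain that at $x$ picks $y$ uniformly from $(x+\delta B_n)\cap K$ and then applies the Metropolis filter. Its stationary density is proportional to $\ell(x)\gamma(x)$ and, crucially, it has no boundary degeneracy---every step is a proper step---so its conductance is $\Omega(\delta/(\sigma\sqrt n))=\Omega(1/n)$ directly from the isoperimetric theorem (the $d_h$-distance clause absorbs the $\ell$ factor). Hence the speedy walk mixes in $O(n^2\log(M/\eps))$ steps. The actual Metropolis ball walk is then viewed as the speedy walk interleaved with wasted attempts that land outside $K$: from a point $x$ the expected number of ball-walk steps per speedy step is $1/\ell(x)$, and since every intermediate distribution remains $M$-warm with respect to $Q$, integrating gives at most $M/\lambda(f)\le 2M$ ball steps per speedy step. \emph{That} is where the linear $M$ comes from. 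Markov's inequality on the total step count, followed by $\log(1/p)$ repetitions concatenated into one long run, yields the $\log(1/p)$ factor and the ``with probability $1-p$'' guarantee. Your proposal lacks this decoupling into speedy steps plus wasted steps, and without it the boundary obstacle you yourself flag blocks the claimed dependence on $M$ and $\eps$.

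A minor point: the walk is restricted to $K\cap 4\sigma\sqrt n B_n$ by definition, so there is nothing to rule out about wandering in norm; the role of the $4\sigma\sqrt n$ cutoff is solely to make the Metropolis acceptance probability uniformly $\ge 1/e$, which you do invoke correctly.
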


\section{Overview of results and techniques}

We will use random walks for sampling. For technical reasons, one usually performs a {\em lazy} version where in each step, with probability half, the walk stays put; alternatively, the number of steps of the walk is chosen randomly in advance. The {\em Metropolis Ball Walk with target density $f$} is outlined in Figure \ref{algo:sampler}.

\begin{figure}[h]
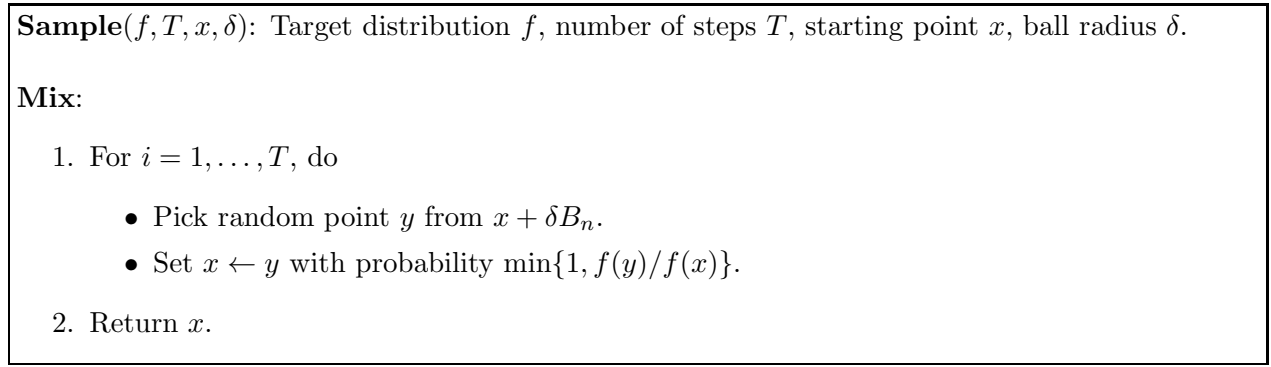

\fbox{\parbox{\textwidth}{
{\bf Sample}$(f,T,x,\delta)$: Target distribution $f$, number of steps $T$, starting point $x$, ball radius $\delta$. \\

{\bf Mix}: 
\begin{enumerate}
\item For $i=1, \ldots , T$, do
\begin{itemize}
\item Pick random point $y$ from $x + \delta B_n$.
\item Set $x \leftarrow y$ with probability $\min\{1, f(y)/f(x)\}$.
\end{itemize}
\item Return $x$.
\end{enumerate}
}}
\caption{Ball walk sampler}
\label{algo:sampler}
\end{figure}


This will be the core of our sampling algorithm. 

\subsection{Outline of sampling algorithm and analysis}

To show the random walk quickly reaches its stationary distribution, we will use the standard method of bounding the conductance.   
For the ball walk, this runs into a hurdle, namely, the local conductance of points near sharp corners of the body can be arbitrarily small, so the walk can get stuck and waste a large number of steps. To avoid this, we could start the walk from a random point chosen from a distribution sufficiently close to the target distribution. But how to generate random points from such a starting distribution? We do this by considering a sequence of distributions, each providing a warm start for the next. The very first distribution is chosen to be a highly concentrated Gaussian so that it almost entirely lies inside the unit ball (inside $K$). Thus sampling from the initial distribution is easy. Each successive Gaussian is ``flatter" with the final one being the standard Gaussian. 
We will see that a sequence of $O^*(n)$ Gaussians suffices in the sense that a random point from one provides a warm start for sampling from the next. 

The next challenge is to show that, from a warm start, the expected number of steps to converge to the stationary distribution is only $O^*(n^2)$. This is usually done by bounding the conductance of the the Markov chain. 
The conductance, $\phi$, of a Markov chain with state space $K$ and next-step distribution $P_x$ is defined as:
\[
\phi = \min_{S \subset K} \frac{\int_S P_x(K\setminus S) \, dQ(x)}{\min Q(S), Q(K\setminus S)}.
\]
Unfortunately, for the ball walk, this can be arbitrarily small, e.g., for points near corners (but also for points in the interior). To utilize the warm start, we use an idea from \cite{KLS97}, namely the {\em speedy} walk. We emphasize that the speedy walk cannot be implemented efficiently and is only a tool for analysis. It is defined as follows.

\noindent
At current point $x$:
\begin{enumerate}
\item Pick random point $y$ from $K \cap x + \delta B_n$.
\item Go to $y$ with probability $\min \{1, f(y)/f(x)\}$.
\end{enumerate}

To capture the stationary distribution of the speedy walk with a Metropolis filter we need another parameter. The {\em local conductance} at $x$ for the speedy walk, without a filter, is defined as follows:
\[
\ell(x) = \frac{\vol(K\cap x+\delta B_n)}{\vol(\delta B_n)}.
\]
The following fact is now easy to verify.
\begin{lemma}\label{lem:speedystationary}
The stationary distribution of the speedy walk with a Metropolis filter applied with a function $f$ has density proportional to $\ell(x)f(x)$.
\end{lemma}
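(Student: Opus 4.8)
The plan is to establish that the speedy walk with a Metropolis filter for $f$ is \emph{reversible} with respect to the measure with density $\pi(x)\propto \ell(x)f(x)$, and then deduce stationarity by the standard integration argument. First I would write the one-step kernel explicitly. From a point $x\in K$, the proposal $y$ is uniform on $K\cap(x+\delta B_n)$, a region of volume $\ell(x)\vol(\delta B_n)$ by the definition of $\ell$, and the move is accepted with probability $\min\{1,f(y)/f(x)\}$. Hence $P(x,\cdot)$ decomposes into an absolutely continuous part, for $y\neq x$,
\[
p(x,y)=\frac{\mathbf{1}[\,y\in K\cap(x+\delta B_n)\,]}{\ell(x)\,\vol(\delta B_n)}\cdot\min\Bigl\{1,\tfrac{f(y)}{f(x)}\Bigr\},
\]
together with an atom at $x$ carrying the leftover rejection probability $r(x)=1-\int_K p(x,y)\,dy$, so that $P(x,\cdot)$ is a genuine probability measure (the same holds verbatim for the lazy version, which only enlarges $r(x)$).

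Next I would verify detailed balance, $\pi(x)p(x,y)=\pi(y)p(y,x)$ for $x\neq y$. The indicator is symmetric in $x$ and $y$, since $\|x-y\|\le\delta$ and membership in $K$ are symmetric conditions. The crucial point is that the factor $\ell(x)$ in the numerator of $\pi(x)=\ell(x)f(x)/Z$ cancels exactly the $\ell(x)$ in the denominator of $p(x,y)$, so that after multiplying out, the identity to check reduces to the classical Metropolis cancellation
\[
f(x)\min\Bigl\{1,\tfrac{f(y)}{f(x)}\Bigr\}=\min\{f(x),f(y)\}=f(y)\min\Bigl\{1,\tfrac{f(x)}{f(y)}\Bigr\},
\]
which holds. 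The atomic parts of $P(x,\cdot)$ and $P(y,\cdot)$ trivially satisfy detailed balance, being supported on the diagonal.

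Finally I would integrate: for any measurable $A\subseteq K$,
\[
\int_K \pi(x)\,P(x,A)\,dx=\int_A\!\!\int_K \pi(x)p(x,y)\,dx\,dy+\int_A \pi(y)r(y)\,dy=\int_A \pi(y)\Bigl(\int_K p(y,x)\,dx+r(y)\Bigr)dy=\pi(A),
\]
using detailed balance (symmetry of $\pi(x)p(x,y)$) in the middle step and that $P(y,\cdot)$ is a probability measure in the last. Hence $\pi$ is stationary, and normalizing by $Z=\int_K \ell(x)f(x)\,dx<\infty$ (finite since $\ell\le 1$ and $f$ is integrable) gives the stated density. There is no serious obstacle here — as the text notes, the fact is routine; the only thing requiring care is the clean split of the kernel into its continuous and atomic parts and keeping track of how the $\ell(x)$ normalization of the proposal is exactly what makes the continuous part symmetric against $\pi$.
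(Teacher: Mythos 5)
Your proof is correct: the detailed-balance computation, in which the $\ell(x)$ in the density $\pi(x)\propto\ell(x)f(x)$ cancels the $\ell(x)\vol(\delta B_n)$ normalization of the uniform proposal on $K\cap(x+\delta B_n)$, leaving the symmetric quantity $\min\{f(x),f(y)\}$, is exactly the verification the paper has in mind when it states the lemma as "easy to verify" without proof. Your handling of the rejection atom, the lazy step, and the finiteness of the normalizing constant is all fine, so nothing is missing.
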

For the speedy walk, we can show that the conductance is $\Omega(1/n)$, and so the total number of steps needed is only $O^*(n^2)$. This is a factor $n$ faster than previous best bounds. We do this by establishing a stronger (and nearly optimal) isoperimetric inequality.

As noted, the speedy walk cannot actually be implemented efficiently. To bound the Metropolis ball walk, we can view it as an interleaving of a speedy walk with wasted steps. Let the Markov chain for the original walk is $w_0, w_1, \ldots, w_i, \ldots,$. The subsequence $w_{i_1}, w_{i_2}, \ldots,$ where we record $x$ if the point $y$ chosen by the Metropolis ball walk is in $K$, corresponds to the speedy walk. We then need to estimate the number of wasted steps from a warm start. We will show that this is at most a constant factor higher than the number of proper steps. The key ingredient of this analysis is the (known) fact that for a body containing the unit ball average local conductance is high for ball radius $\delta = O(1/\sqrt{n})$. Even within the speedy walk, there are ``null" steps due to the Metropolis filter. However, by restricting the walk to a large ball, we ensure that the probability of rejection by the filter is bounded by a constant, and therefore the number of wasted steps within the speedy walk is at most a constant fraction of all steps.

\subsection{Outline of volume algorithm and analysis}

The Gaussian volume algorithm runs on top of the sampling algorithm. As in sampling, it starts with a Gaussian that is highly focused inside the unit ball contained in $K$, then flattens out the Gaussian in a series of phases $1,2 \ldots, k$. The initial density $f_0$ is chosen so that its measure outside $K$ is negligible. In each phase $i$, we estimate the ratio of the measure of $K$ according to $f_i$ to the measure according to the previous density $f_{i-1}$. Unlike sampling, where we just maintain a single random point and adjust the target density (``temperature") periodically, for the volume algorithm we need multiple samples at each temperature. Fortunately, this can be done using only $O^*(\sqrt{n})$ phases with $O^*(\sqrt{n})$ samples in each phase. The sampling algorithm needs $O^*(n)$ phases but only maintains one random point; so $\sqrt{n}$ pure sampling phases (with no computation of ratios) are done between consecutive phases of the volume algorithm. 

The phases of the volume algorithm are chosen so that the variance of the ratio estimator is small. The main part of the analysis is showing that this can be done with $O^*(\sqrt{n})$ phases and $O^*(\sqrt{n})$ samples per phase. The idea of the algorithm and the tools used in the analysis are similar to the LV algorithm, with two significant departures. The first is that we only use Gaussian densities. This is important for maintaining the mixing time. The second is that 
unlike all previous integration/volume algorithms, we do not need to make the body/distribution isotropic or otherwise ``round" the body. This significantly simplifies the algorithm and is also important for our improved time complexity.

\subsection{Preliminaries}
A function $f : \mathbb{R}^n \rightarrow \mathbb{R}^+$ is \emph{logconcave} if it has convex support and the logarithm of $f$, wherever $f$ is non-zero, is concave. Equivalently, $f$ is logconcave if for any $x,y \in \mathbb{R}^n$ and any $\lambda \in [0,1]$, 
\[
f(\lambda x + (1  - \lambda) y ) \geq f(x)^\lambda f(y)^{1-\lambda}
\]

Let $\gamma:\R^n \rightarrow \R_+$ be the density of the standard Gaussian $\N(0,I)$.  

For two probability distributions $P$ and $Q$ with state space $K$, we will use $M(P,Q)$ to denote the $M$-warmness between $P$ and $Q$ as defined in \ref{eqn:m-warm}
and $d_{tv}(P,Q)$ to denote the total variation distance between $P$ and $Q$:
\[
d_{tv}(P,Q) = \sup_{S \subseteq K} | P(S) - Q(S) | .
\]

For a nonnegative function $f:\R^n \rightarrow \R_+$, we define the $f$-distance between two points $u,v \in \R^n$ as
\[
d_f(u,v) = \frac{|f(u)-f(v)|}{\max\{f(u), f(v)\}}.
\]

\section{Algorithm}

We now describe the main Gaussian volume algorithm. In the description below, we assume that the convex set $K$ contains a unit ball centered at the origin (which is also the mean of the Gaussian). The heart of the algorithm is the ball walk, described in Figure \ref{algo:sampler}, which takes in a current point $x$ from some distribution and lets it mix for an appropriate number of steps. In our Gaussian volume algorithm (Figure \ref{algo:volume}), the number of steps is chosen so that the point becomes sufficiently close to the target distribution $f$ (i.e. total variation distance at most $\nu$) with a $\log(1/p)$ overhead, where $p=1/(20\cdot \#phases)$ for an overall sampling failure probabilty of $1/20$.

\begin{figure}[h]
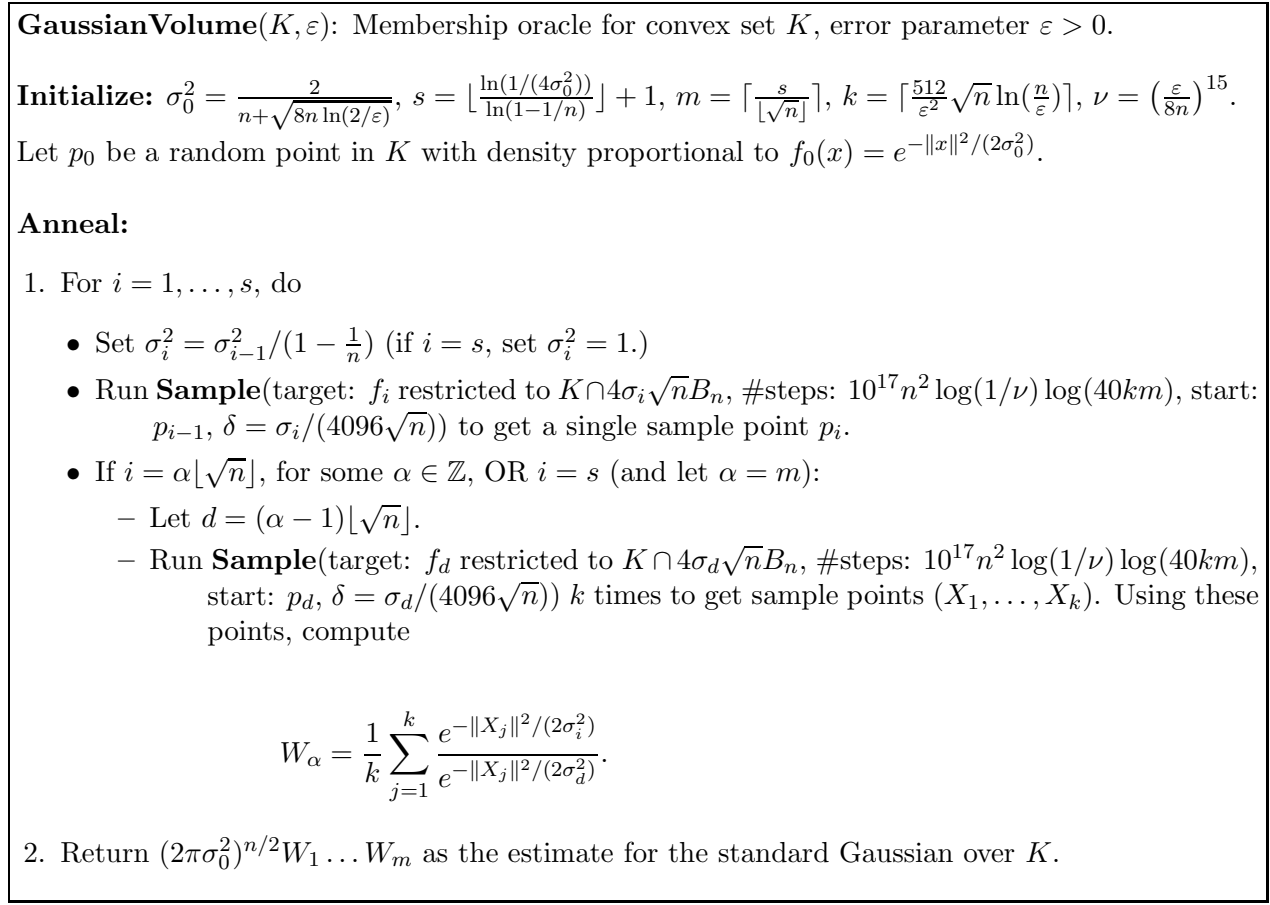

\fbox{\parbox{\textwidth}{

{\bf GaussianVolume}$(K,\eps)$: Membership oracle for convex set $K$, error parameter $\eps > 0$.\\

\noindent
{\bf Initialize:} $\sigma_0^2 = \frac{2}{n+\sqrt{8n\ln(2/\eps)}}$, $s = \lfloor \frac{\ln(1/(4\sigma_0^2))}{\ln(1-1/n)} \rfloor+1$, $m = \lceil \frac{s}{\lfloor \sqrt{n} \rfloor }\rceil$, $k=\lceil \frac{512}{\eps^2}\sqrt{n} \ln(\frac{n}{\eps}) \rceil$, $\nu = \big (\frac{\eps}{8n}\big )^{15}$. \\
Let $p_0$ be a random point in $K$ with density proportional to $f_0(x) = e^{-\|x\|^2/(2\sigma_0^2)}$.\\

{\bf Anneal:}

\noindent
\begin{enumerate}
\setlength{\itemindent}{-1em}
\item For $i=1,  \ldots  , s$, do
\begin{itemize}
\setlength{\itemindent}{-2em}
\item Set $\sigma_i^2 = \sigma_{i-1}^2/(1-\frac{1}{n})$ (if $i=s$, set $\sigma_i^2=1$.)
\item Run {\bf Sample}(target: $f_i$ restricted to $K \cap 4 \sigma_i \sqrt{n} B_n$, \#steps: $10^{17}n^2\log(1/\nu)\log(40km)$, start: $p_{i-1}$, $\delta = \sigma_i/(4096\sqrt{n})$) to get a single sample point $p_i$.
\item If $i = \alpha \lfloor \sqrt{n} \rfloor$, for some $\alpha \in \mathbb{Z}$, OR $i=s$ (and let $\alpha=m$):
\begin{itemize}
\setlength{\itemindent}{-2em}
\item Let $d=(\alpha-1)\lfloor \sqrt{n} \rfloor$.
\item Run {\bf Sample}(target: $f_d$ restricted to $K \cap 4 \sigma_d \sqrt{n} B_n$, \#steps: $10^{17}n^2\log(1/\nu)\log(40km)$, start: $p_d$, $\delta = \sigma_d/(4096\sqrt{n})$) $k$ times to get sample points $(X_1, \ldots, X_k)$. Using these points, compute 

\[
W_{\alpha} = \frac{1}{k} \sum_{j=1}^k \frac{e^{ - \|X_j\|^2/(2\sigma_i^2)}}{e^{ - \|X_j\|^2/(2\sigma_d^2)}}.
\]
\end{itemize}
\end{itemize}
\item Return $(2 \pi \sigma_0^2)^{n/2}W_1 \ldots W_{m}$ as the estimate for the standard Gaussian over $K$.
\end{enumerate}
}}
\caption{Gaussian volume algorithm}
\label{algo:volume}
\end{figure}

In Figure \ref{algo:volume}, we describe how to use the sampler described in Figure \ref{algo:sampler} to obtain the Gaussian volume of $K$. Define $f_i(x) = e^{-\|x\|^2/(2\sigma_i^2)}$. We select $\sigma_0^2$ such that
\[
\int_K f_0(x) dx \geq (1-\frac{\eps}{2})\int_{\R^n}f_0(x)dx,
\]
and therefore the volume of $f_0(x)$ inside $K$ is approximately $(2\pi\sigma_0^2)^{n/2}$. We slowly approach the standard Gaussian by continually dividing the variance by $1-1/n$, which maintains an $M$-warmness at most $\sqrt{e}$ between consecutive phases. We then mark every $\lfloor \sqrt{n}\rfloor$-th phase as a volume phase, and compute the ratio of its integral over $K$ with the integral of the previous volume phase over $K$. Our estimate for the standard Gaussian volume over $K$  is then (where $\sigma_s=1$)
\[
(2\pi\sigma_0^2)^{n/2}\cdot \frac{\int_K f_{\lfloor \sqrt{n} \rfloor}(x)dx}{\int_K f_0(x)dx} \cdot \frac{\int_K f_{2\lfloor \sqrt{n} \rfloor}(x)dx}{\int_K f_{\lfloor \sqrt{n} \rfloor}(x)dx} \ldots \frac{\int_K f_s(x)dx}{\int_K f_{(m-1)\lfloor \sqrt{n}\rfloor}(x)dx}.
\]

To see that the complexity of the algorithm is $O^*(n^3)$, we can consider it as two components: sampling phases and volume phases (and bound each complexity separately). 

\begin{enumerate}
\item Sampling Phases:  $O^*(n)$ phases $\times$ $O^*(1)$ samples per phases $\times$ $O^*(n^2)$ steps per sample $=O^*(n^3)$.
\item Volume Phases: $O^*(\sqrt{n})$ phases $\times$ $O^*(\sqrt{n})$ samples per phases $\times$ $O^*(n^2)$ steps per sample $=O^*(n^3)$.
\end{enumerate}

\section{Isoperimetry}

The following theorem is due to Brascamp and Lieb. 
\begin{theorem}\cite{BL76}\label{thm:Brascamp-Lieb}
Let $\gamma:\R^n\rightarrow\R_+$ be the standard Gaussian density in $\R^n$. Let $f:\R^n\rightarrow \R_+$ be any logconcave function. Define the density function $h$ in $\R^n$ as follows:
\[
h(x) = \frac{f(x)\gamma(x)}{\int_{\R^n} f(y)\gamma(y) \, dy}.
\]
Fix a unit vector $v \in \R^n$ ,  let $\mu = \E_h(x)$. Then, for any $\alpha \ge 1$,
\[
\E_h(|v^T(x -\mu)|^\alpha) \le \E_{\gamma}(|x_1|^\alpha).
\]
\end{theorem}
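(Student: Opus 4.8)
\medskip
\noindent\emph{Proof proposal.} The plan is to reduce the inequality to one dimension and there compare the target density with a Gaussian in the convex order, via the optimal transport map from the Gaussian. By rotational invariance of $\gamma$ I may take $v = e_1$. Integrating out $x_2,\dots,x_n$, the first marginal of $h$ is $h_1(t) = c\,e^{-t^2/2}g(t)$ with $g(t) = \int_{\R^{n-1}} f(t,y)\gamma_{n-1}(y)\,dy$, and $g$ is logconcave by Prékopa's theorem (a marginal of a logconcave function is logconcave). Hence $h_1 = e^{-W}$ with $W$ convex and $W - t^2/2$ concave, i.e.\ $W'' \ge 1$ in the distributional sense. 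Since $v^T(x-\mu)$ has the same law under $h$ as $t-m$ under $h_1$, where $m = \E_{h_1}(t)$, it suffices to show: if $\rho = e^{-W}$ is a probability density on $\R$ with $W'' \ge 1$ and mean $m$, then $\E_\rho(|t-m|^\alpha) \le \E_{\gamma_1}(|x|^\alpha)$. Replacing $\rho$ by $\rho(\cdot+m)$ only multiplies the logconcave factor by $e^{-mt}$ and preserves $W'' \ge 1$, so I may assume $m=0$.

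Let $T = F_\rho^{-1}\circ\Phi$ be the increasing map pushing $\gamma_1$ forward to $\rho$. From $F_\rho(T(x)) = \Phi(x)$ one gets $\log T'(x) = \mathrm{const} - x^2/2 + W(T(x))$; at an interior local maximum $x_*$ of $T'$ we have $T''(x_*)=0$, so $(\log T')''(x_*) = -1 + W''(T(x_*))\,T'(x_*)^2 \le 0$ and hence $T'(x_*)^2 \le 1/W''(T(x_*)) \le 1$. Together with the behavior as $x\to\pm\infty$ — controlled by the two-sided Gaussian-type bounds on $\rho$ forced by $W''\ge 1$ and $\int e^{-W}<\infty$ — this yields $0 \le T' \le 1$ everywhere; this is the one-dimensional case of Caffarelli's contraction theorem. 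Consequently the inverse map $S = T^{-1}$, which pushes $\rho$ forward to $\gamma_1$, is increasing with $S' \ge 1$, and $\E_\rho(S(t)) = \E_{\gamma_1}(x) = 0 = \E_\rho(t)$.

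To finish, fix any convex $\phi$ (eventually $\phi(u)=|u|^\alpha$, convex since $\alpha\ge 1$). By the tangent-line inequality $\phi(S(t)) - \phi(t) \ge \phi'(t)\,(S(t)-t)$ pointwise, so
\[
\E_{\gamma_1}(\phi(x)) - \E_\rho(\phi(t)) \;=\; \E_\rho\!\big(\phi(S(t))-\phi(t)\big) \;\ge\; \E_\rho\!\big(\phi'(t)(S(t)-t)\big).
\]
Both $\phi'(t)$ and $S(t)-t$ are non-decreasing in $t$, hence positively correlated under $\rho$, so the right-hand side is at least $\E_\rho(\phi'(t))\cdot\E_\rho(S(t)-t) = 0$. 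Therefore $\E_\rho(\phi(t)) \le \E_{\gamma_1}(\phi(x))$; i.e.\ $\rho$ is dominated by $\gamma_1$ in the convex order, which for $\phi(u)=|u|^\alpha$ is precisely the assertion.

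The single step that carries all the weight is the contraction bound $T'\le 1$; everything after it is soft. The genuinely delicate point there is the behavior of $T'$ at infinity, where the critical-point argument is silent — I would handle this via the explicit two-sided tail estimates $e^{-W(t)} \le \mathrm{const}\cdot e^{-\frac12(t-t_0)^2}$ and $W(t) \le \tfrac12 t^2 + (\text{affine})$ implied by $W''\ge 1$ together with integrability of $e^{-W}$, or alternatively just quote Caffarelli's theorem for $\mu=\gamma_1$ and $\nu=\rho$.
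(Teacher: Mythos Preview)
The paper does not prove this theorem; it is quoted from Brascamp and Lieb and used as a black box (to feed into Theorem~\ref{thm:iso} and Corollary~\ref{cor:iso-conc-bound}). So there is no in-paper proof to compare against.

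Your argument is correct and takes a route different from the original 1976 proof. You reduce to one dimension via Pr\'ekopa's theorem, observe that the resulting marginal has the form $e^{-W}$ with $W''\ge 1$, and then invoke (the one-dimensional case of) Caffarelli's contraction theorem: the increasing transport $T$ from $\gamma_1$ to $e^{-W}$ is $1$-Lipschitz, so its inverse $S$ has $S'\ge 1$, and the correlation inequality applied to the two non-decreasing functions $\phi'$ and $S-\mathrm{id}$ yields convex-order domination of $\rho$ by $\gamma_1$, hence the moment bound for every $\alpha\ge 1$ at once. This transport approach postdates Brascamp--Lieb by two decades (Caffarelli's theorem is from 2000) but is clean and arguably more transparent for the statement at hand; it also gives the full convex-order comparison rather than a moment-by-moment inequality. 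The only genuinely delicate step is the one you already flag---ruling out $T'\to\infty$ at the ends, where the interior-maximum argument is silent---and either of your proposed remedies (direct two-sided tail estimates from $W''\ge 1$, or simply citing Caffarelli's theorem for the pair $(\gamma_1,\rho)$) is adequate to close it.
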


We have the following concentration bound.

\begin{corollary}\label{cor:iso-conc-bound}
For $h$ as defined in Theorem \ref{thm:Brascamp-Lieb}, and any $t \ge 1$,
\[
\Pr_h(\|x - \mu\|^2  \ge n+ct\sqrt{n}) \le e^{-t^2}
\]
for an absolute constant $c$.
\end{corollary}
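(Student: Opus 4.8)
The plan is to exploit that $h$ is not merely logconcave but \emph{strongly} logconcave — it carries the unit curvature of the Gaussian factor — and to combine the Gaussian concentration this entails with the second‑moment estimate that Theorem~\ref{thm:Brascamp-Lieb} already supplies.

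First I would observe that $h = f\gamma/\int f\gamma$ is $1$-strongly logconcave. Writing $h = e^{-V}$, on the (convex) support of $f$ we have $V(x) = \tfrac12\|x\|^2 - \log f(x) + \mathrm{const}$, so $V(x) - \tfrac12\|x\|^2$ is convex; equivalently $\nabla^2 V \succeq I$ wherever things are smooth, the general case following by approximating the hard constraint by steep convex potentials. Hence $h$ is a logconcave tilt of $\gamma$, and by the classical argument (Bakry--\'Emery, equivalently the Pr\'ekopa--Leindler/Maurey argument) it inherits the sub‑Gaussian concentration of $\gamma$ with an absolute constant: for every $1$-Lipschitz $g:\R^n\to\R$ and every $r\ge 0$,
\[
\Pr_h\!\bigl(g(x)\ge \E_h g + r\bigr)\ \le\ e^{-r^2/2}.
\]
Applying Theorem~\ref{thm:Brascamp-Lieb} with $\alpha=2$ and $v=e_i$ gives $\E_h\bigl((x_i-\mu_i)^2\bigr)\le\E_\gamma(x_1^2)=1$; summing over $i$ yields $\E_h\|x-\mu\|^2\le n$, hence $\E_h\|x-\mu\|\le\sqrt n$ by Cauchy--Schwarz.

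Now take $g(x)=\|x-\mu\|$, which is $1$-Lipschitz, and $r=\sqrt{n+ct\sqrt n}-\sqrt n$, so that $\E_h g + r\le\sqrt{n+ct\sqrt n}$. Since $\{\|x-\mu\|^2\ge n+ct\sqrt n\}=\{g(x)\ge\sqrt{n+ct\sqrt n}\}\subseteq\{g(x)\ge\E_h g+r\}$, the concentration bound gives $\Pr_h(\|x-\mu\|^2\ge n+ct\sqrt n)\le e^{-r^2/2}$, and the routine estimate $r=\frac{ct\sqrt n}{\sqrt{n+ct\sqrt n}+\sqrt n}\ge\frac{ct}{\sqrt{1+c}+1}$ (valid for $t\le\sqrt n$, which more than covers every use of the corollary) shows $r^2/2\ge t^2$ once $c\ge 2+2\sqrt2$, so any absolute constant $c\ge 5$ works. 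The one genuine subtlety worth flagging is that the Brascamp--Lieb bounds constrain only the \emph{one‑dimensional marginals} of $h$, and from marginals alone one cannot do better than $\|x-\mu\|^2 = n\pm O(n)$ — sharpening this to $\pm O(\sqrt n)$ for a general logconcave measure is precisely the open thin‑shell conjecture. The $e^{-t^2}$ tail here really comes from the extra $+I$ of curvature in $h=f\gamma$; Theorem~\ref{thm:Brascamp-Lieb} is needed only to pin the mean of $\|x-\mu\|^2$ at $\le n$.
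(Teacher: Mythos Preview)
Your argument is sound, and in fact the paper gives no proof of this corollary at all --- it is simply stated after Theorem~\ref{thm:Brascamp-Lieb} and left to the reader. The route you take (observe that $h=f\gamma/\!\int f\gamma$ is $1$-strongly logconcave, invoke Bakry--\'Emery to get sub-Gaussian Lipschitz concentration, and use Theorem~\ref{thm:Brascamp-Lieb} with $\alpha=2$ only to pin $\E_h\|x-\mu\|^2\le n$) is the standard and presumably intended one. Your closing remark is exactly the right diagnosis: the one-dimensional moment bounds of Theorem~\ref{thm:Brascamp-Lieb} by themselves control marginals, not $\|x-\mu\|^2$, and getting $\pm O(\sqrt n)$ fluctuations genuinely requires the extra curvature of the Gaussian factor.

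Your caveat that the computation only covers $t\le\sqrt n$ is not a gap in your proof but a limitation of the statement itself. Even for $h=\gamma$, the Laurent--Massart bound $\Pr(\|x\|^2\ge n+2\sqrt{nu}+2u)\le e^{-u}$ shows that $\Pr(\|x\|^2\ge n+ct\sqrt n)\le e^{-t^2}$ can hold only while $t\lesssim c\sqrt n$; for $t\gg\sqrt n$ the tail is $e^{-\Theta(t\sqrt n)}$, not $e^{-t^2}$. The paper's single use of the corollary (in the proof of Theorem~\ref{thm:sampling}, to show $K\cap 4\sigma\sqrt n B_n$ carries all but $e^{-n}$ of the mass) is at $t=\Theta(\sqrt n)$, well inside your range.
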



The next lemma about one-dimensional isoperimetry is from \cite{KLS95}

\begin{lemma}\label{lem:1d-iso}\cite{KLS95}
For any one-dimensional isotropic logconcave function $f$, and any partition $S_1, S_2, S_3$ of the real line, 
\[
\pi_f(S_3) \ge \iso \, d(S_1,S_2) \pi_f(S_1)\pi_f(S_2).
\] 
\end{lemma}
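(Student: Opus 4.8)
The plan is to reduce the general partition inequality to a statement about a fixed separating pair of points, and then to bound the density on the middle interval $S_3$ from below using logconcavity together with the isotropy normalization. First I would observe that since $S_1, S_2, S_3$ partition $\R$ and the quantity $d(S_1,S_2)$ denotes the distance between $S_1$ and $S_2$, we may assume $S_1$ lies to the left of $S_2$ (otherwise swap), so that $S_3$ contains an interval $[a,b]$ separating them, where $b-a \ge d(S_1,S_2) =: t$. It suffices to prove the inequality with $S_1 = (-\infty,a]$, $S_2 = [b,\infty)$, $S_3 = (a,b)$, since shrinking $S_1, S_2$ to these half-lines only decreases the left side's competitors and the general case follows by monotonicity of $\pi_f$ in the sets. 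So the target becomes
\[
\pi_f((a,b)) \ge \iso \,(b-a)\, \pi_f((-\infty,a])\, \pi_f([b,\infty)).
\]

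The key step is a pointwise lower bound on $f$ over $(a,b)$. By logconcavity, for any $x \in (a,b)$ and any $u \le a$, $w \ge b$ we can write $x$ as a convex combination of $u$ and $w$, giving $f(x) \ge f(u)^\lambda f(w)^{1-\lambda}$; integrating this relation appropriately (a standard averaging/localization argument, e.g. the Prékopa--Leindler or localization lemma machinery used in \cite{KLS95}) yields that $\min_{x\in[a,b]} f(x)$ is at least a constant multiple of $\max(\text{average of }f\text{ on }S_1,\ \text{average of }f\text{ on }S_2)$, or more precisely that $f$ restricted to $[a,b]$ dominates $f$ restricted to one of the two outer half-lines up to the concavity bend. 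Combining this with the isotropy constraint — which forces the variance to be $1$ and hence controls how fast a one-dimensional logconcave density can decay, pinning down the $\iso = \ln 2$ constant — produces the claimed product inequality. The cleanest route is to invoke the localization lemma of \cite{KLS95} directly: it reduces the integral inequality over $\R$ to the same inequality for exponential (or linear) densities on an interval, where it becomes an elementary calculus computation, and the worst case is exactly the exponential density, which is where $\ln 2$ appears.

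The main obstacle I anticipate is getting the constant $\iso = \ln(2)$ exactly right rather than merely some absolute constant: the reduction via localization is standard, but tracking the extremal one-dimensional density (an exponential truncated to a half-line, or a two-sided exponential) and verifying that the ratio $\pi_f(S_3)/\big(d(S_1,S_2)\pi_f(S_1)\pi_f(S_2)\big)$ is minimized there — and equals $\ln 2$ in the limit — requires a careful case analysis over where $a$ and $b$ sit relative to the mode and over the shape of the extremal density. Everything else (the reduction to half-line $S_1,S_2$, the monotonicity, the logconcave interpolation) is routine. Since the lemma is quoted verbatim from \cite{KLS95}, I would in fact simply cite that reference for the full argument and only sketch the localization reduction for the reader's benefit.
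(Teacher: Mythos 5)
Your ultimate fallback---citing \cite{KLS95}---is essentially what the paper does: the lemma is presented as imported from \cite{KLS95}, and the paper's own intended derivation is only a two-line bridge. Namely, Theorem 5.1 of \cite{KLS95} gives, for a logconcave density $f$, the bound $\pi_f(S_3) \ge \frac{\ln 2}{\E_f(\|x\|)}\, d(S_1,S_2)\, \pi_f(S_1)\pi_f(S_2)$, and the isotropic normalization then yields $\E_f(\|x\|) \le \sqrt{\E_f(\|x\|^2)} = 1$, which is exactly the form stated in the lemma. So the only content needed beyond the citation is this Cauchy--Schwarz step; your write-up never isolates it, using isotropy only vaguely (``controls how fast the density can decay''), whereas its precise role is to bound the first absolute moment that appears in the KLS constant. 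If you intend to cite, you should still record that one-line conversion, since the lemma as stated (isotropic, constant exactly $\ln 2$) is not literally the statement in \cite{KLS95}.

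As a standalone argument, your sketch has two genuine gaps. First, the reduction ``we may assume $S_1$ lies to the left of $S_2$ (otherwise swap), so that $S_3$ contains a separating interval'' is unjustified: the partition is into arbitrary measurable sets, and $S_1,S_2$ may interleave (e.g.\ $S_1=[0,1]\cup[10,11]$, $S_2=[5,6]$), in which case no single interval of $S_3$ separates them and your half-line monotonicity argument does not apply. The standard repair is a genuine combinatorial reduction to the case of three intervals (the same kind of step the paper invokes inside the proof of Theorem \ref{thm:iso}), not a swap. Second, the constant $\ln 2$---the entire point of the lemma as it is used downstream---is never established; you explicitly defer the extremal computation, and the ``pointwise lower bound on $f$ over $(a,b)$'' step is too vague to recover it. Note also that localization buys you nothing here: the statement is already one-dimensional, so the exponential-needle endpoint of localization is precisely the computation you would still have to carry out.
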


\begin{theorem}\label{thm:iso}
Let $\pi$ be the Gaussian distribution $N(0,\sigma^2 I_n)$ with density function $\gamma$ restricted by a logconcave function $f:\R^n \rightarrow \R_+$, i.e., $\pi$ has
density $d\pi(x)$ proportional to $h(x)=f(x)d\gamma(x)$. Let $S_1,S_2,S_3$ partition $\R^n$ such that
for any $u \in S_1, v\in S_2$, either $\|u-v\| \ge d/\iso$ or $d_h(u,v) \ge 4d\sqrt{n}$. Then, 
\[
\pi(S_3) \ge \frac{d}{\sigma}\pi(S_1)\pi(S_2).
\]
\end{theorem}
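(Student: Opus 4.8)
The plan is to reduce the $n$-dimensional isoperimetric inequality to the one-dimensional statement of Lemma~\ref{lem:1d-iso} via the standard localization / needle-decomposition technique, combined with the Brascamp--Lieb concentration bound (Corollary~\ref{cor:iso-conc-bound}) to handle the second alternative in the hypothesis. First I would normalize: by scaling we may assume $\sigma = 1$, so the target density is $h(x) \propto f(x)\gamma(x)$ with $\gamma$ the standard Gaussian, and we must show $\pi(S_3) \ge d\,\pi(S_1)\pi(S_2)$. Suppose for contradiction that $\pi(S_3) < d\,\pi(S_1)\pi(S_2)$. Apply the localization lemma (Lovász--Simonovits / KLS) to the three integrable functions built from $S_1,S_2,S_3$ and the measure $h$: this produces a ``needle,'' i.e.\ a segment $[a,b] \subset \R^n$ together with a logconcave function $g$ on it (the restriction of a linear exponential times $h$ along the segment, which stays logconcave since $h$ is logconcave) such that the same violated inequality holds for the one-dimensional measure $\pi_g$ induced on the segment, with $S_i$ replaced by $S_i \cap [a,b]$.

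Next I would analyze the needle. Along the segment, parametrize by arclength and let $L = \|b-a\|$ be its length. The one-dimensional logconcave density $g$ may not be isotropic, so let $r$ be its standard deviation; rescaling the segment to unit variance and applying Lemma~\ref{lem:1d-iso} gives $\pi_g(S_3 \cap [a,b]) \ge \iso \cdot \frac{d(S_1',S_2')}{r}\,\pi_g(S_1\cap[a,b])\pi_g(S_2\cap[a,b])$, where $d(S_1',S_2')$ is the minimum Euclidean distance between the two parts along the needle. Now the hypothesis of the theorem splits into two cases for the points realizing this minimum distance: either $\|u-v\| \ge d/\iso$, in which case $d(S_1',S_2')/r \ge d/(\iso\, r)$ and we are done provided $r \le 1$; or $d_h(u,v) \ge 4d\sqrt{n}$, i.e.\ $h$ (equivalently $g$ up to the linear exponential factor) varies by a large multiplicative factor between the two separated regions, which forces the needle to be ``long'' in the $h$-metric and hence, via logconcavity, forces either a large Euclidean gap or a large variance $r$. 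The remaining ingredient is the bound $r \le 1$ (or more precisely $r \le \sigma$), which is exactly where Corollary~\ref{cor:iso-conc-bound} enters: the Brascamp--Lieb comparison says that under $h$ every one-dimensional marginal has variance at most that of the Gaussian, i.e.\ at most $1$; since the localization needle's density is obtained from $h$ by multiplying by a one-dimensional linear exponential and restricting to a segment, and such operations cannot increase the variance of a logconcave measure (a linear exponential tilt of a logconcave measure only shifts mass toward one end, and restriction to an interval also contracts the support), the needle variance $r$ is at most $1$ as well. Combining, $\pi_g(S_3\cap[a,b]) \ge (d/r)\pi_g(S_1\cap[a,b])\pi_g(S_2\cap[a,b]) \ge d\,\pi_g(S_1\cap[a,b])\pi_g(S_2\cap[a,b])$, contradicting the violated inequality on the needle.

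The step I expect to be the main obstacle is the careful handling of the second separation alternative, $d_h(u,v) \ge 4d\sqrt{n}$, on the one-dimensional needle. Here one must argue that a large multiplicative oscillation of $h$ along a segment of a logconcave measure, together with the variance bound $r \le 1$, still yields a large \emph{metric} separation in the rescaled (isotropic) needle so that Lemma~\ref{lem:1d-iso} delivers a factor at least $d$. Concretely, if $g$ is logconcave on $[0,L]$ with $\log g$ concave, and the values at the two boundary regions differ by a factor $e^{4d\sqrt{n}}$, then $|\log g|$ has slope on the order of $4d\sqrt{n}/(\text{gap length})$, while the fact that a logconcave density with such a steep exponential profile has small standard deviation ($r = O(\text{gap length}/(4d\sqrt{n})) \cdot$ something, roughly $r \lesssim 1/|\,(\log g)'\,|$) forces the gap length to be $\Omega(d\sqrt{n}\, r) = \Omega(d\sqrt{n})$ when $r$ is not tiny — and when $r$ is tiny the factor $d/r$ in the one-dimensional inequality is enormous. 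Making these two regimes fit together cleanly, with the constant $4$ and the $\sqrt{n}$ exactly matching, is the delicate bookkeeping; the rest is a routine application of localization plus Brascamp--Lieb. A secondary technical point is justifying that localization applies to the measure $h$ (which is logconcave but not a probability density a priori) and that the resulting needle density is genuinely logconcave — both are standard but should be stated.
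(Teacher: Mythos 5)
Your overall plan coincides with the paper's: assume the conclusion fails, apply the Lov\'asz--Simonovits localization lemma to reduce to a one-dimensional needle, bound the needle's variance by $1$ via Brascamp--Lieb, and invoke Lemma \ref{lem:1d-iso}. However, there are two genuine gaps. The first is your treatment of the alternative $d_h(u,v)\ge 4d\sqrt{n}$. By the paper's definition, $d_h(u,v)=|h(u)-h(v)|/\max\{h(u),h(v)\}\le 1$ is a \emph{relative difference}, not a log-ratio, so your reading that ``the values at the two boundary regions differ by a factor $e^{4d\sqrt{n}}$'' and the ensuing slope/gap-length heuristic argue about the wrong quantity. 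Moreover, in this case the hypothesis gives no lower bound at all on $\|u-v\|$, so Lemma \ref{lem:1d-iso}, which only sees metric separation, cannot by itself produce the factor $d$ no matter how the variance is normalized. The paper closes this case with a separate one-dimensional inequality, namely (\ref{1d-1}), taken from Lemma 3.8 of \cite{KLS97}, which bounds the isoperimetry of the weighted needle density from below by $d_h(u,v)/(4\sqrt{n})$; this is precisely where the $4\sqrt{n}$ in the hypothesis comes from, and your sketch contains no correct substitute for this ingredient.

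The second gap is the justification of the variance bound $r\le 1$ for the needle. Your claim that multiplying a logconcave measure by a one-dimensional exponential (or by the localization weight) and restricting to a segment ``cannot increase the variance'' is false in general: tilting the logconcave density $e^{-|x|}$ by $e^{\theta x}$ with $\theta\uparrow 1$ makes the variance blow up. Note also that the Lov\'asz--Simonovits form of localization used here produces the weight $l(t)^{n-1}$ with $l$ linear, not an exponential tilt. The correct (and simpler) argument, which the paper uses, is to apply Theorem \ref{thm:Brascamp-Lieb} with $\alpha=2$ \emph{directly to the needle density}: along the segment it has the form $\bigl(f((1-t)a+tb)\,l(t)^{n-1}\bigr)\gamma((1-t)a+tb)$, i.e., a logconcave function times a one-dimensional standard Gaussian, so its variance is at most $1$; the relevant tool is Theorem \ref{thm:Brascamp-Lieb} itself rather than Corollary \ref{cor:iso-conc-bound}, which is a tail bound used elsewhere in the paper. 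With these two repairs your argument becomes essentially the paper's proof.
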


\begin{proof}
We prove the theorem for the case $\sigma=1$, then note that by applying the scaling $x = y/\sigma$, we get the general case.

Our main tool, as in previous work, is the Localization Lemma of Lov\'asz and Simonovits \cite{LS93}. Suppose the conclusion is false. Define $h(x)=f(x)\gamma(x)$.
 Then there exists a partition $S_1, S_2, S_3$ for which, for some positive real number $A$, 
\begin{align*}
\int_{S_1} h(x)\, dx &=  A\int_{\R^n}h(x)\, dx \\
\mbox{ and } \int_{S_3} h(x)\, dx &<  d A \int_{S_2} h(x)\, dx.
\end{align*}
By the localization lemma, there must be a ``needle'' given by $a,b\in \R^n$ and a nonnegative linear function $l: [0,1] \rightarrow \R_+$ for which,
\begin{align*}
\int_{(1-t)a+tb \in S_1 \cap [0,1]} h((1-t)a+tb) l(t)^{n-1}\, dt &= A \int_{(1-t)a+tb \in [0,1]} h((1-t)a+tb) l(t)^{n-1}\, dt\\
\int_{(1-t)a+tb \in S_3 \cap [0,1]} h((1-t)a+tb) l(t)^{n-1}\, dt &< d A\int_{(1-t)a+tb \in S_2 \cap [0,1]} h((1-t)a+tb)l(t)^{n-1}\, dt.
\end{align*}
By a standard combinatorial argument, we can assume that $Z_i = \{t: (1-t)a+tb \in S_i\}$ are intervals that partition $[a,b]$.
Thus, to reach a contradiction, it suffices to prove that for a one-dimensional logconcave function $h(t)=f((1-t)a+tb)\gamma((1-t)a+tb)$ with support $[a,b] \subset \R$ and $a\le u\le v \le b$, the
following statements hold:
\begin{eqnarray}
\int_{a}^b h(t)l(t)^{n-1}\, dt \int_u^v h(t)l(t)^{n-1}\, dt &\ge& \frac{d_h(u,v)}{4\sqrt{n}} \int_a^u  h(t)l(t)^{n-1}\, dt  \int_v^b  h(t)l(t)^{n-1} \label{1d-1}\\
\int_a^b h(t)l(t)^{n-1}\, dt \int_u^v h(t)l(t)^{n-1}\, dt &\ge& \iso \|u-v\| \int_a^u  h(t)l(t)^{n-1}\, dt  \int_v^b  h(t)l(t)^{n-1} \label{1d-2}.
\end{eqnarray}
The first inequality (\ref{1d-1}) follows directly from Lemma 3.8 in \cite{KLS97}. 
To see the second inequality (\ref{1d-2}), we first note that by applying Theorem \ref{thm:Brascamp-Lieb}, with $\alpha=2$, 
we have that the variance of 
the distribution proportional to $h(t)l(t)^{n-1}$ is at most $1$. This is because $h(t)l(t)^{n-1} = (f((1-t)a+tb)l(t)^{n-1})\gamma((1-t)a+tb)$ and the $f((1-t)a+tb)l(t)^{n-1}$ is itself a logconcave function. Now, we note that by scaling down to increase the variance to exactly $1$, the isoperimetric coefficient can only go down. Hence, the second inequality is implied by Lemma \ref{lem:1d-iso}.
\end{proof}

\section{Sampling}
The analysis of the sampling algorithm is divided into several parts: bounding the conductance of the speedy walk, the mixing time of the Metropolis ball walk from a warm start, bounding the warmth of the distribution from one phase to the next, and finally the complexity of sampling.

\subsection{Conductance}

\begin{lemma}\label{lem:overlap}
Let $S, \bar{S}$ be a partition of a convex body $K$, and $u \in S, v\in \bar{S}$ be such that $\|u - v\| < \delta/\sqrt{n}$. Then,
\[
\Pr_u(\bar{S}) + \Pr_v(S) \ge \frac{1}{e(e+1)} \min \{\ell(u), \ell(v)\}.
\]
\end{lemma}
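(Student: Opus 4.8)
The plan is to bound the two one-step escape probabilities from below by comparing each transition kernel with the other through the overlap of the two $\delta$-balls centered at $u$ and $v$. First I would recall that for the Metropolis ball walk with target $f$, the probability of moving from $u$ into $\bar S$ is
\[
\Pr_u(\bar S) = \frac{1}{\vol(\delta B_n)} \int_{(u+\delta B_n)\cap K \cap \bar S} \min\left\{1, \frac{f(y)}{f(u)}\right\} dy,
\]
and symmetrically for $\Pr_v(S)$. Since $u\in S$ and $v\in\bar S$, any point in the intersection $(u+\delta B_n)\cap(v+\delta B_n)\cap K$ lies in exactly one of $S$ or $\bar S$, so it contributes to exactly one of the two integrals. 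Hence $\Pr_u(\bar S)+\Pr_v(S)$ is at least
\[
\frac{1}{\vol(\delta B_n)} \int_{(u+\delta B_n)\cap(v+\delta B_n)\cap K} \min\left\{1,\frac{f(y)}{f(u)},\frac{f(y)}{f(v)}\right\} dy,
\]
after observing that the Metropolis acceptance factor at a point $y$ reached from $u$ is $\min\{1,f(y)/f(u)\}$ and the one from $v$ is $\min\{1,f(y)/f(v)\}$, each of which dominates the common lower bound $\min\{1,f(y)/f(u),f(y)/f(v)\}=\min\{1,f(y)/\max\{f(u),f(v)\}\}$.

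Next I would control the acceptance factor. Here I expect to use logconcavity of $f$ together with the fact that $\|u-v\|<\delta/\sqrt n$ and that all the relevant points $y$ lie within $\delta$ of both $u$ and $v$; a short computation using $f(x)=e^{-\|x\|^2/(2\sigma^2)}$-type bounds (or more generally the bounded-curvature argument standard in ball-walk analyses) should give $f(y)\ge f(u)/e$ and $f(y)\ge f(v)/e$ for every such $y$, so the integrand is at least $1/e$ pointwise on the overlap region. That reduces the problem to the geometric estimate
\[
\Pr_u(\bar S)+\Pr_v(S) \ge \frac{1}{e}\cdot \frac{\vol\left((u+\delta B_n)\cap(v+\delta B_n)\cap K\right)}{\vol(\delta B_n)}.
\]

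The remaining step is the purely geometric claim that when two balls of radius $\delta$ have centers at distance less than $\delta/\sqrt n$, their intersection has volume at least a constant fraction — specifically at least $\frac{1}{e+1}$ — of $\vol(\delta B_n)$, and that intersecting further with $K$ costs at most the smaller of $\ell(u),\ell(v)$ in the natural sense, giving the factor $\min\{\ell(u),\ell(v)\}$. Concretely, one shows $\vol((u+\delta B_n)\cap(v+\delta B_n)) \ge (1-\tfrac{1}{\sqrt n}\cdot O(1))\vol(\delta B_n)$ by the standard sliced-ball estimate, and then uses convexity of $K$ to argue that $(u+\delta B_n)\cap K$ and $(v+\delta B_n)\cap K$ occupy large fractions of their respective balls since $\ell(u),\ell(v)$ are exactly those fractions — combining via inclusion-exclusion so that the triple intersection with $K$ retains a $\frac{1}{e+1}$-fraction times $\min\{\ell(u),\ell(v)\}$. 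I would expect this last bookkeeping — cleanly extracting the $\min\{\ell(u),\ell(v)\}$ factor from the interaction between the ball overlap and the body $K$, with the precise constant $\frac{1}{e+1}$ — to be the main obstacle; the acceptance-factor bound and the kernel comparison are routine, but getting the geometric constant right (and verifying it is not lossy enough to break the downstream $\Omega(1/n)$ conductance bound) requires care with the inclusion-exclusion and the shrinkage estimate for the lens-shaped overlap region.
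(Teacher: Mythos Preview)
Your proposal is correct and matches the paper's approach. The paper's own proof is essentially a one-line citation: it invokes Lemma~3.6 of \cite{KLS97} for the uniform (unfiltered) speedy walk, which supplies exactly the geometric overlap estimate you outline (the $\tfrac{1}{e+1}\min\{\ell(u),\ell(v)\}$ part), and then observes that the Metropolis filter accepts with probability at least $1/e$ because the walk is confined to $4\sigma\sqrt{n}B_n$ with $\delta\le\sigma/(8\sqrt{n})$ --- precisely the ``$f(y)\ge f(u)/e$'' bound you anticipated. So you have reconstructed the argument the paper outsources to KLS97, and your identification of the filter bound as the only new ingredient here is exactly right. One small point: the step you flag as the main obstacle (extracting $\min\{\ell(u),\ell(v)\}$ with the constant $\tfrac{1}{e+1}$) is not done via naive inclusion--exclusion in KLS97 but via a short convexity argument on the function $x\mapsto\vol((x+\delta B_n)\cap K)$; your instinct that this is where the care is needed is correct, but the mechanism is slightly different from what you sketch.
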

\begin{proof}
This lemma is essentially based on Lemma 3.6 in \cite{KLS97}. There the speedy walk makes a uniform step. Here we apply a Metropolis filter. However, since we restrict to a body of radius $4\sigma\sqrt{n}$ and $\delta \leq \sigma/8\sqrt{n}$, we have that the acceptance probability of the filter is at least $1/e$. 
\end{proof}

\begin{lemma}\label{lem:lulv}
Let $S, \bar{S}$ be a partition of a convex body $K$, and $u \in S, v\in \bar{S}$ be such that $\|u - v\| < \delta/\sqrt{n}$ and $\Pr_u(\bar{S}) \le \ell(u)/20e, \Pr_v(S) \le \ell(v)/20e$. Then, 
\[
\frac{|\ell(u)\gamma(u) - \ell(v)\gamma(v)|}{\max \{\ell(u)\gamma(u), \ell(v)\gamma(v)\}} \ge \frac{1}{4}.
\]
\end{lemma}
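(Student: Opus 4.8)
\noindent The plan is to obtain the bound as a short constant-chasing consequence of Lemma~\ref{lem:overlap}. The two hypotheses say that the one-step escape probabilities are tiny compared with the local conductances; fed into Lemma~\ref{lem:overlap}, this forces $\ell(u)$ and $\ell(v)$ to differ by a large constant factor, and since $u$ and $v$ are so close that $\gamma$ barely changes between them, that gap is inherited by $\ell(u)\gamma(u)$ and $\ell(v)\gamma(v)$---which is exactly the claim.

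First I would combine the hypotheses with Lemma~\ref{lem:overlap}. Since $\|u-v\|<\delta/\sqrt n$, that lemma gives $\Pr_u(\bar{S})+\Pr_v(S)\ge \frac{1}{e(e+1)}\min\{\ell(u),\ell(v)\}$, while the assumptions give $\Pr_u(\bar{S})+\Pr_v(S)\le \frac{1}{20e}\big(\ell(u)+\ell(v)\big)$. Chaining these two inequalities and rearranging,
\[
\frac{\max\{\ell(u),\ell(v)\}}{\min\{\ell(u),\ell(v)\}} \;=\; \frac{\ell(u)+\ell(v)}{\min\{\ell(u),\ell(v)\}}-1 \;\ge\; \frac{20}{e+1}-1 \;>\; 4 ,
\]
the last step because $e<3$. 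Second, I would bound the fluctuation of $\gamma$ over one step: working as in Lemma~\ref{lem:overlap} (the body is contained in $4\sigma\sqrt n\,B_n$ and $\delta\le\sigma/(8\sqrt n)$), we have $\|u\|+\|v\|\le 8\sigma\sqrt n$ and $\|u-v\|\le\sigma/(8n)$, so
\[
\Big|\ln\frac{\gamma(u)}{\gamma(v)}\Big| \;=\; \frac{\big|\,\|v\|^2-\|u\|^2\,\big|}{2\sigma^2} \;\le\; \frac{\|u-v\|\,(\|u\|+\|v\|)}{2\sigma^2} \;\le\; \frac{1}{2\sqrt n} \;\le\; \frac12 ,
\]
i.e. $\gamma(u)/\gamma(v)\in[e^{-1/2},e^{1/2}]$.

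Finally I would put these together. Assuming without loss of generality $\ell(u)\ge 4\ell(v)$ (the other case is symmetric), we get $\ell(u)\gamma(u)\ge 4e^{-1/2}\,\ell(v)\gamma(v)>\tfrac43\,\ell(v)\gamma(v)$, so $\max\{\ell(u)\gamma(u),\ell(v)\gamma(v)\}=\ell(u)\gamma(u)$ and
\[
\frac{|\ell(u)\gamma(u)-\ell(v)\gamma(v)|}{\max\{\ell(u)\gamma(u),\ell(v)\gamma(v)\}} \;=\; 1-\frac{\ell(v)\gamma(v)}{\ell(u)\gamma(u)} \;\ge\; 1-\frac{e^{1/2}}{4} \;>\; \frac14 .
\]
I do not expect a real obstacle here; the only thing needing care is that the factor-$>4$ gap in local conductances produced by Lemma~\ref{lem:overlap} survives the worst-case multiplicative fluctuation $e^{\pm 1/2}$ of $\gamma$ over a step, which it does comfortably since $4e^{-1/2}>\tfrac43$.
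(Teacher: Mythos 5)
Your argument is correct and follows essentially the same route as the paper's: combine the hypotheses with Lemma~\ref{lem:overlap} to force a constant-factor gap between $\ell(u)$ and $\ell(v)$, then use that $\gamma$ changes by at most a bounded factor over a step of length $\delta/\sqrt{n}$ (you get $e^{1/2}$ where the paper uses $e$) to transfer the gap to $\ell(u)\gamma(u)$ versus $\ell(v)\gamma(v)$. The only differences are minor constant-chasing choices, and like the paper you correctly lean on the ambient assumptions $K\subseteq 4\sigma\sqrt{n}B_n$ and $\delta\le\sigma/(8\sqrt{n})$ from the surrounding conductance argument.
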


\begin{proof}
Assume that $\ell(u) \ge \ell(v)$. Then by Lemma \ref{lem:overlap},
\[
\frac{1}{e(e+1)}\ell(v) \le \Pr_u(\bar{S}) + \Pr_v(S) \le \frac{1}{20e}(\ell(u) + \ell(v))
\]
which implies that 
\[
\ell(v) \le \frac{e+1}{20-(e+1)}\ell(u).
\]
Therefore,
\begin{align*}
\frac{|\ell(u)\gamma(u) - \ell(v)\gamma(v)|}{\ell(u)\gamma(u)} &= 1 - \frac{\ell(v)}{\ell(u)}\frac{\gamma(v)}{\gamma(u)} \\ 
&\ge 1 - \frac{e(e+1)}{20-(e+1)} \ge \frac{1}{4}.
\end{align*}

\end{proof}

\begin{theorem}\label{thm:speedyconductance}
Let $K$ be a convex body such that $B_n \subseteq K \subseteq 4\sigma \sqrt{n}B_n$.
The conductance of speedy walk applied to $K$ with Gaussian density ${\cal N}(0,\sigma^2 I)$ and  $\delta \le \sigma/8\sqrt{n}$ steps is $\Omega(\frac{\delta}{\sigma\sqrt{n}})$. 
\end{theorem}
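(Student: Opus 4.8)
The plan is to run the standard conductance-via-isoperimetry argument of Lov\'asz--Simonovits, using the isoperimetric inequality of Theorem \ref{thm:iso} (with $f$ the indicator of $K$, so that the restricted Gaussian measure has density proportional to $\gamma$ on $K$) together with the one-step overlap estimate of Lemma \ref{lem:overlap}. Recall that the stationary distribution of the speedy walk with a Metropolis filter for $\gamma$ is, by Lemma \ref{lem:speedystationary}, the distribution $Q$ with $dQ(x) \propto \ell(x)\gamma(x)$. Fix a partition $K = S \cup \bar S$ and, without loss of generality, assume $Q(S) \le Q(\bar S)$. We must lower bound
\[
\Phi := \frac{\int_S \Pr_x(\bar S)\, dQ(x)}{Q(S)}
\]
by $\Omega(\delta/(\sigma\sqrt n))$.

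First I would do the usual three-way split. Let $S_1 = \{x \in S : \Pr_x(\bar S) < \ell(x)/20e\}$ and $S_2 = \{x \in \bar S : \Pr_x(S) < \ell(x)/20e\}$, and $S_3 = K \setminus (S_1 \cup S_2)$. If $Q(S_1) < Q(S)/2$ or $Q(S_2) < Q(\bar S)/2$, then a constant fraction of the $Q$-mass of $S$ lies in $S_3 \cap S$ (resp.\ the escape probability is already large on half of $S$), and in either of those cases $\Phi = \Omega(1)$, which is more than enough — here one uses that $\ell(x) \geq 1/e$ or so on the relevant region, or more simply that points not in $S_1$ contribute escape probability $\ge \ell(x)/20e$, and $\ell$ is bounded below by an absolute constant since $B_n \subseteq K$ and $\delta$ is small (every $x \in K$ sees at least a fixed fraction of $\delta B_n$ inside $K$, because $K$ contains a unit ball). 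So the interesting case is $Q(S_1) \ge Q(S)/2$ and $Q(S_2) \ge Q(\bar S)/2$.

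In that case the key geometric input is a distance bound between $S_1$ and $S_2$. For $u \in S_1$, $v \in S_2$ with $\|u-v\| < \delta/\sqrt n$, Lemma \ref{lem:overlap} gives $\Pr_u(\bar S) + \Pr_v(S) \ge \frac{1}{e(e+1)}\min\{\ell(u),\ell(v)\}$, while the defining inequalities of $S_1,S_2$ give the reverse bound $\le \frac{1}{20e}(\ell(u)+\ell(v))$; feeding this into Lemma \ref{lem:lulv} yields $d_{\ell\gamma}(u,v) \ge 1/4$. Thus any $u \in S_1$, $v \in S_2$ satisfy \emph{either} $\|u-v\| \ge \delta/\sqrt n$ \emph{or} $d_{\ell\gamma}(u,v) \ge 1/4$. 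I would like to apply Theorem \ref{thm:iso} to the measure with density $\propto \ell(x)\gamma(x)$; the mild technical point is that $\ell$ is not logconcave in general (it is the local-conductance profile of a convex body), so one instead invokes the isoperimetric inequality in the form that handles this density directly — i.e.\ apply Theorem \ref{thm:iso} to $\gamma$ restricted to $K$, and absorb the bounded factor $\ell$ (which lies between an absolute constant and $1$) into the constant, comparing $d_\gamma$ and $d_{\ell\gamma}$ up to constants. This gives $Q(S_3) \ge c\,\frac{\delta}{\sigma\sqrt n}\,Q(S_1)Q(S_2) \ge c'\,\frac{\delta}{\sigma\sqrt n}\,Q(S)$, using $Q(S_2)\ge 1/2 \cdot Q(\bar S) \ge 1/4$ from $Q(S)\le 1/2$. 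Finally, every $x \in S_3 \cap S$ has $\Pr_x(\bar S) \ge \ell(x)/20e \ge c''$, and every $x \in S_3 \cap \bar S$ has $\Pr_x(S) \ge c''$; pairing these via reversibility ($\int_{S_3\cap S}\Pr_x(\bar S)\,dQ = \int_{S_3\cap \bar S}\Pr_x(S)\,dQ$ up to the usual one-step coupling bound), we get $\int_S \Pr_x(\bar S)\,dQ(x) \ge c'''\, Q(S_3) \ge \Omega\big(\frac{\delta}{\sigma\sqrt n}\big) Q(S)$, as desired.

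The main obstacle I anticipate is the bookkeeping around the non-logconcavity of $\ell$ when invoking the isoperimetric inequality: one must check that replacing $d_{\ell\gamma}$ by $d_\gamma$ (and $Q$ by the pure restricted Gaussian) only costs constants, which hinges on $\ell$ being bounded above by $1$ and below by an absolute constant uniformly on $K$ — the lower bound being exactly where $B_n \subseteq K$ and $\delta = O(\sigma/\sqrt n)$ enter. The rest is the routine Lov\'asz--Simonovits template, and the $1/\sqrt n$ in the final bound is inherited directly from the $\|u-v\| \ge \delta/\sqrt n$ threshold in Lemma \ref{lem:overlap} combined with the $d/\sigma$ coefficient of Theorem \ref{thm:iso} taken at $d \asymp \delta/\sqrt n$.
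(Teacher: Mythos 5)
Your overall skeleton matches the paper's proof: the same three-way split $S_1,S_2,S_3$ with threshold $\ell(x)/20e$, the dichotomy ``$\|u-v\|\ge\delta/\sqrt n$ or $d_{\ell\gamma}(u,v)\ge 1/4$'' from Lemmas \ref{lem:overlap} and \ref{lem:lulv}, and an application of Theorem \ref{thm:iso} with $d\asymp\delta/\sqrt n$. However, the step where you diverge from the paper contains a genuine gap. Your workaround for the isoperimetric inequality rests on the claim that $\ell$ is bounded below by an absolute constant on $K$ because $B_n\subseteq K$. That is false: for $x$ near a sharp corner of $K$ at distance up to $4\sigma\sqrt n$ from the origin, the cone from $x$ over the inscribed unit ball has half-angle $O(1/(\sigma\sqrt n))$, so $\ell(x)$ can be exponentially small in $n$. (This is exactly the phenomenon that motivates the speedy walk and the \emph{average} local conductance bound of Lemma \ref{lem:alc}; a pointwise bound is unavailable.) Consequently both places where you invoke this bound break: the fallback case ``$\Phi=\Omega(1)$ when $Q(S_1)<Q(S)/2$'' (the paper instead integrates the escape probability $\ge\ell(x)/(20e)$, together with the Metropolis acceptance lower bound $1/e$ valid since $\|x\|\le 4\sigma\sqrt n$ and $\delta\le\sigma/(8\sqrt n)$, against the measure, so the $\ell$ factor is carried by the density rather than bounded pointwise), and, more seriously, the transfer from $d_{\ell\gamma}$ to $d_\gamma$ and from $Q$ to the plain restricted Gaussian. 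Even granting a two-sided constant bound on $\ell$, the transfer would still fail: $d_{\ell\gamma}(u,v)\ge 1/4$ can be caused entirely by the ratio $\ell(u)/\ell(v)$ with $\gamma(u)=\gamma(v)$, so it yields no lower bound on $d_\gamma(u,v)$.

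The fix is simpler than your workaround and is what the paper does: your premise that ``$\ell$ is not logconcave in general'' is incorrect. The function $x\mapsto\vol(K\cap(x+\delta B_n))$ is the convolution $\mathbf{1}_K*\mathbf{1}_{\delta B_n}$, which is logconcave by Pr\'ekopa--Leindler, so $f=\ell\cdot\mathbf{1}_K$ is a logconcave function and Theorem \ref{thm:iso} applies \emph{directly} to the stationary density proportional to $h=\ell\gamma$, with the hypothesis stated in terms of $d_h$ --- precisely the quantity Lemma \ref{lem:lulv} controls. Taking $d=\min\{\ln 2\cdot\delta/\sqrt n,\,1/(16\sqrt n)\}$ so that the dichotomy matches the theorem's hypothesis (``$\|u-v\|\ge d/\ln 2$ or $d_h(u,v)\ge 4d\sqrt n$'') gives $\pi(S_3)\gtrsim\frac{\delta}{\sigma\sqrt n}\pi(S_1)\pi(S_2)$ for the actual stationary measure $\pi$, and then the flow bound on $S_3$ (again weighted by the stationary density, not by a pointwise lower bound on $\ell$) finishes the argument. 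With that replacement your proof becomes the paper's proof; as written, the isoperimetric step and the $\Omega(1)$ fallback do not go through.
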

\begin{proof}
Let $S \subset K$ be an arbitrary measurable subset of $K$ and consider the following partition of $K$.
\[
S_1 = \{x \in S \, :\, P_x(\bar{S}) < \frac{\ell(x)}{20e}\}\\
S_2 = \{x \in \bar{S}\, :\, P_x(S) < \frac{\ell(x)}{20e}\}\\
S_3 = K\setminus S_1\setminus S_2.
\]
Let $h(x)=\ell(x)\gamma(x)$.
We claim that for any $u \in S_1, v\in S_2$, we have either $\|u-v\| \ge \delta/\sqrt{n}$ or 
\[
d_h(u,v) = \frac{|h(u)-h(v)|}{\max \{h(u), h(v)\}} \ge \frac{1}{4}.
\]
This follows directly from Lemma \ref{lem:lulv}.

We can also assume that $\pi(S_1) \ge \pi(S)/2$ and $\pi(S_2) \ge \pi(\bar{S})/2$. If not, ergodic flow (probability of going from $S$ to $\bar{S}$) can be bounded as follows.
\begin{eqnarray*}
\Phi(S) &=& \frac{1}{2}(P(S,\bar{S})+P(\bar{S},S))\\ 
&\ge& \frac{1}{2}\int_{S_3}\k\frac{\ell(x)}{20e}\gamma(x)\, dx 
\end{eqnarray*}
Here $\k$ comes from the minimum probability of the Metropolis filter. We obtain this by noting that the minimum value of 
\[
\frac{f(x)}{f(u)} \ge \exp(-\frac{2\|x-u\|\|u\|+\|x-u\|^2}{2\sigma^2}) \ge e^{-\delta\|u\| - \frac{\delta^2}{2\sigma^2}} \ge \frac{1}{e}.
\]
Here we have used the fact that $\delta \le \sigma/8\sqrt{n}$ and from $\|u\| \le 4\sigma\sqrt{n}$.

Now we can apply Theorem \ref{thm:iso}, with 
\[
d=\min \left\{ \iso \frac{\delta}{\sqrt{n}}, \frac{1}{16\sqrt{n}} \right\}
\]
to the partition $S_1, S_2, S_3$ to get
\[
\int_{\R^n} \ell(x)\gamma(x) \, dx \int_{S_3}\ell(x)\gamma(x)\, dx \ge \frac{d}{\sigma}  \int_{S_1} \ell(x) \gamma(x)\, dx \int_{S_2} \ell(x)\gamma(x)\, dx.
\]
Using this, we get
\[
\Phi(S) \ge  \frac{1}{10^{4}}\frac{\delta}{\sigma\sqrt{n}}\min\{\pi_\ell(S), \pi_\ell(\bar{S})\}
\]
The conductance is thus $\Omega(\delta/\sigma\sqrt{n})$ as claimed.
\end{proof}

\subsection{Mixing from a warm start}

To show that the conductance will imply a rate of convergence of the speedy walk, we will use a result of \cite{LS93} to bound the total variation distance between the current distribution and the target distribution.

\begin{theorem}\label{thm:mixingtime}
Let $Q_t$ be the distribution after $t$ steps of a lazy Markov chain and $Q$ be its stationary distribution. Then,
\[
d_{tv}(Q_t, Q) \leq M(Q_0,Q) (1- \frac{\phi^2}{2})^t
\]
\end{theorem}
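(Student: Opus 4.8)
The plan is to follow the standard conductance-based mixing argument of Lov\'asz and Simonovits \cite{LS93}, adapted to an $M$-warm start rather than a general one. The key object is the signed measure $g_t = Q_t - Q$, together with the function obtained by ordering state space by the density ratio $dQ_t/dQ$. Concretely, for a parameter $x\in[0,1]$ I would define $h_t(x)$ to be the supremum of $Q_t(A) - xQ(A)$ over measurable $A\subseteq K$; equivalently, one sorts $K$ by the likelihood ratio and integrates. This $h_t$ is a concave function on $[0,1]$ with $h_t(0)=h_t(1)=0$, and $d_{tv}(Q_t,Q)=\sup_x h_t(x)=h_t(x^*)$ for some $x^*$. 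The $M$-warmness of $Q_0$ enters here: since $Q_0(A)\le M(Q_0,Q)\,Q(A)$ for every $A$, we get $h_0(x)\le \max\{0,\,M(Q_0,Q)-x\}\cdot(\text{something})$, and more usefully $h_0(x)\le M(Q_0,Q)\sqrt{x(1-x)}$ or simply $h_0(x) \le M(Q_0,Q)$ uniformly; the precise bound one needs is $h_0 \le M(Q_0,Q)\cdot \min\{x,1-x\}^{1/2}$-type, but for the stated inequality the crude bound $h_0(x)\le M(Q_0,Q)$ combined with the recursion suffices.

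The heart of the argument is a one-step contraction: because the chain is lazy (so the next-step distribution at every point puts mass at least $1/2$ on staying put) and has conductance $\phi$, the function $h_t$ satisfies, for every $x\in[0,1]$,
\[
h_{t+1}(x) \le \tfrac{1}{2}\,h_t\!\left(x - 2\phi\,\min\{x,1-x\}\right) + \tfrac{1}{2}\,h_t\!\left(x + 2\phi\,\min\{x,1-x\}\right).
\]
This is exactly the step in \cite{LS93}: laziness gives the factor $\tfrac12$ on each side and prevents the argument drifting, while the conductance bound forces the ``sorted'' curve to spread outward by an amount proportional to $\phi\min\{x,1-x\}$ at each step. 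Iterating this recursion against the reference function $C_0\sqrt{x(1-x)}$ (which is preserved in form, with the constant shrinking) yields $h_t(x)\le h_0\text{-bound}\cdot(1-\phi^2/2)^t$ after taking suprema; plugging in $h_0\le M(Q_0,Q)$ gives $d_{tv}(Q_t,Q)=\sup_x h_t(x)\le M(Q_0,Q)(1-\phi^2/2)^t$, which is the claim.

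The main obstacle is establishing the one-step recursion cleanly --- in particular verifying that the ``spread by $2\phi\min\{x,1-x\}$'' quantification is valid at every $x$, which requires relating the ergodic flow across the level set $\{dQ_t/dQ > \text{threshold}\}$ to $\phi$ times the measure of the smaller side, and then checking that laziness lets one absorb the averaging without loss. Everything after that (choosing the dominating function $\sqrt{x(1-x)}$, checking it is a fixed shape under the recursion up to the factor $1-\phi^2/2$, and taking the supremum) is a routine convexity computation. Since this theorem is quoted essentially verbatim from \cite{LS93} with only the warm-start constant tracked, I would actually expect the paper to cite \cite{LS93} and give at most a one-line indication; but the above is the self-contained route.
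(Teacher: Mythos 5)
Your instinct about the provenance is right: the paper gives no proof of this statement at all --- it is quoted directly from Lov\'asz--Simonovits \cite{LS93} (their conductance bound for lazy chains from an $M$-warm start, which in fact gives the stronger $\sqrt{M}(1-\phi^2/2)^t$; the stated bound follows since $M\ge 1$). So the question is only whether your sketch of the LS argument is sound, and there it has one genuine flaw. The claim that ``the crude bound $h_0(x)\le M(Q_0,Q)$ combined with the recursion suffices'' is false. The one-step recursion $h_{t+1}(x)\le\tfrac12 h_t(x-2\phi\min\{x,1-x\})+\tfrac12 h_t(x+2\phi\min\{x,1-x\})$ merely averages two values of $h_t$, and since the spread is at most $\min\{x,1-x\}$ the endpoint values $h_t(0)=h_t(1)=0$ are never reached from an interior $x$; hence a constant upper bound is a fixed point of the recursion and yields no decay whatsoever. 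The factor $1-\phi^2/2$ per step comes entirely from the strict concavity of a $\sqrt{\cdot}$-shaped dominating curve, via $\tfrac12\bigl(\sqrt{1-2\phi}+\sqrt{1+2\phi}\bigr)\le 1-\phi^2/2$, so the warm start must enter through the endpoint behavior of $h_0$: one uses $h_0(x)\le\min\{(M-1)x,\,1-x\}\le\sqrt{M}\,\min\{\sqrt{x},\sqrt{1-x}\}$ (exactly the bound you mention and then set aside), and then shows this shape is preserved with the factor $1-\phi^2/2$, giving $d_{tv}(Q_t,Q)\le\sqrt{M}(1-\phi^2/2)^t\le M(1-\phi^2/2)^t$.

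Two smaller points. First, your definition of the LS curve is off: with $h_t(x)=\sup_A\bigl(Q_t(A)-xQ(A)\bigr)$ one gets $h_t(0)=1$ and $h_t(1)=d_{tv}(Q_t,Q)$, not $h_t(0)=h_t(1)=0$; the curve you want is $h_t(x)=\sup\{Q_t(A)-Q(A): Q(A)=x\}$ (taken over fractional sets so that it is well defined and concave), for which $\sup_x h_t(x)=d_{tv}(Q_t,Q)$. Second, if you use $\sqrt{x(1-x)}$ rather than $\min\{\sqrt{x},\sqrt{1-x}\}$ as the dominating shape, the shape-preservation step needs its own verification (a plain Jensen average only gives $\sqrt{x(1-x)-y^2}$, which does not contract uniformly near the endpoints); the standard choice $\min\{\sqrt{x},\sqrt{1-x}\}$ avoids this. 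With these corrections your outline is exactly the \cite{LS93} proof that the paper is invoking.
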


\noindent
\textbf{Remark:} If we want $d_{tv}(Q_t,Q)\leq \eps$, then we can select $t=\lceil 2 \phi^{-2} \ln (\frac{M(Q_0,Q)}{\eps}) \rceil$. \\

To bound the number of wasted steps, we will need the following average local conductance of the ball walk with a Metropolis filter:
\[
\lambda(f) = \frac{\int_{K} \ell(x) f(x)\, dx}{\int_K f(x)\, dx}.
\]
We say that a density function $f:\R^n \rightarrow \R_+$ is {\em $a$-rounded} if any level set $L$ contains a ball of radius $a \cdot \mu_f(L)$. 

We will show that $\lambda$ is large for "well-rounded" distributions.
\begin{lemma} \label{lem:alc}
For any $a$-rounded logconcave density function $f$ in $\R^n$, 
\[
\lambda(f) \ge 1 - 32 \frac{\delta^{1/2}n^{1/4}}{a^{1/2}}.
\]
\end{lemma}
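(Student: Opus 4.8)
The plan is to bound $1 - \lambda(f)$ from above, where $1 - \lambda(f) = \int_K (1 - \ell(x)) f(x)\,dx / \int_K f(x)\,dx$ is the average probability mass that a $\delta$-ball step "sticks out" of $K$. The key observation is that $1 - \ell(x)$ is small unless $x$ is within distance $\delta$ of $\partial K$ — more precisely, if $\mathrm{dist}(x, \partial K) \ge \delta$ then $\ell(x) = 1$, and in general $1 - \ell(x) \le \Pr_{z \sim \delta B_n}(x + z \notin K)$ is controlled by how close $x$ is to the boundary. So the whole estimate reduces to showing that under an $a$-rounded logconcave density $f$, the mass within distance $\delta$ of $\partial K$ is small.

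First I would make the one-dimensional reduction: for a fixed $x \in K$, bound $1 - \ell(x)$ by the probability that a random chord direction leads out of $K$ within distance $\delta$; using convexity of $K$ and the supporting-hyperplane characterization, this is at most (a constant times) $t/\delta$ where $t = \min(\delta, \mathrm{dist}(x,\partial K))$ ... actually cleaner is to write $1 - \ell(x) \le c \cdot \sqrt{\delta / r(x)}$ won't be quite right either — rather I would use that the fraction of the $\delta$-ball at $x$ outside a halfspace at signed distance $s = \mathrm{dist}(x,\partial K)$ from $x$ is a decreasing function of $s/\delta$ which is $O(1)$ for $s \le \delta$ and drops quickly; integrating this against $f$. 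Then I would invoke the $a$-roundedness: each level set $L_\lambda = \{f \ge \lambda\}$ contains a ball of radius $a\,\mu_f(L_\lambda)$, hence $K$ itself (taking $\lambda \to 0$, or rather the support) contains a ball of radius $a$, and more usefully the "inner parallel body" argument shows that the $f$-mass of points within distance $t$ of $\partial K$ is $O(t/a)$ times a $\sqrt{n}$-type factor coming from the Gaussian-like tail of one-dimensional marginals of a logconcave function. Combining: $1 - \lambda(f) \le \int_K \big(\text{chord-exit prob at }x\big) f(x)\,dx$, split by the distance of $x$ to $\partial K$, and each dyadic shell of width $\sim 2^{-j}\delta$ contributes $O(2^{-j/2})$ times the $f$-mass of that shell, which is $O(2^{-j}\delta/a \cdot \sqrt{n})$ — wait, the powers need care — summing the geometric-type series gives the stated $O(\delta^{1/2} n^{1/4} / a^{1/2})$.

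The main obstacle I expect is getting the exponents right: the bound is $\delta^{1/2} n^{1/4} a^{-1/2}$, not the naive $\delta/a$, and the $n^{1/4}$ is the telltale sign of a Cauchy–Schwarz step. I anticipate the proof balances two estimates — the crude one $1 - \ell(x) \le O(1)$ always, versus a refined one $1 - \ell(x) \le O(\sqrt{\delta}/\sqrt{r})$ when $x$ is at distance $r \ge \delta$ from the boundary, valid because a $\delta$-ball at distance $r$ from a supporting hyperplane sticks out by a fraction $\le e^{-\Omega(r^2/\delta^2)}$ in the worst direction but only $\sim\sqrt{\delta/r}$... — and then optimizes the cutoff between "near-boundary" and "far-boundary" regions, with the $a$-roundedness controlling the near-boundary mass as $O(\text{cutoff}\cdot\sqrt n / a)$ and the far region handled by the square-root decay, the crossover landing exactly at cutoff $\sim \sqrt{\delta a / \sqrt n}$ to give the claimed rate. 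Setting up the halfspace/chord comparison so that it interfaces cleanly with the level-set roundedness hypothesis (rather than a direct inradius hypothesis on $K$) is the part that requires the most bookkeeping; everything else is standard one-dimensional logconcave integration.
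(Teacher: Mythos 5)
The paper's proof is much shorter than what you attempt: it defines the smoothed function $\hat f(x)=\min_D \frac{1}{\vol(D)}\int_{x+D}f(y)\,dy$ (minimum over convex halves $D$ of $\delta B_n$), cites Lemma 6.3 of Lov\'asz--Vempala, which states precisely that $\int_K \hat f \ge 1-32\,\delta^{1/2}n^{1/4}/a^{1/2}$ for $a$-rounded logconcave $f$, and then checks the pointwise inequality $\ell(x)f(x)\ge \hat f(x)$. All of the hard work involving $a$-roundedness lives inside the cited lemma. Your proposal instead tries to reprove that content from scratch via a ``mass near $\partial K$'' argument, and as written it has concrete gaps. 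First, your local estimate is wrong: if $\mathrm{dist}(x,\partial K)\ge\delta$ then $1-\ell(x)=0$ exactly, so there is no regime where a refined bound of the form $1-\ell(x)\le O(\sqrt{\delta/r})$ for $r\ge\delta$ is needed or true; for $r<\delta$ the fraction of the $\delta$-ball beyond a supporting hyperplane at distance $r$ decays like $e^{-\Omega(nr^2/\delta^2)}$, so the effective boundary-shell width is $\delta/\sqrt{n}$, not $\delta$, and there is no power-law $\sqrt{\delta/r}$ decay to balance against.

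Second, and more importantly, the step you leave as an assertion---that $a$-roundedness of the \emph{level sets} of $f$ implies the $f$-mass within distance $t$ of $\partial K$ is $O(t\sqrt{n}/a)$---is exactly the crux, and it is not correct as stated. The roundedness hypothesis says each level set $L$ contains a ball of radius $a\,\mu_f(L)$, which degenerates for high level sets of small measure; turning this into a shell-mass bound requires integrating over level sets (using $\vol(\{x\in L:\mathrm{dist}(x,\partial L)\le t\})\le \vol(L)\cdot nt/(a\mu_f(L))$, together with the fact that the shell near $\partial K$ inside $L$ lies in the shell near $\partial L$) and then balancing the small-measure and large-measure level sets; that balancing is what produces a bound of order $\sqrt{nt/a}$, not $t\sqrt{n}/a$, and combined with $t\sim\delta/\sqrt{n}$ it yields the stated $\delta^{1/2}n^{1/4}a^{-1/2}$. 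So the square root genuinely comes from a trade-off over level sets of $f$, not from the crude-versus-refined trade-off in $\ell(x)$ that you propose (your own hedges about the exponents reflect this). The outline could likely be completed along these corrected lines, but as it stands the two quantitative ingredients it rests on are, respectively, misstated and unproved; alternatively, the lemma follows in a few lines once you are willing to invoke the Lov\'asz--Vempala smoothing lemma as the paper does.
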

\begin{proof}
Define $\hat{f}$ as the following smoothened version of $f$, obtained by convolving $f$ with a ball of radius $\delta$. 
Let $D$ be a convex subset of $\delta B_n$ of half its volume.
\[
\hat{f}(x) = \min_{D} \frac{\int_{y \in x+D} f(y)\, dy}{\vol(D)}.
\]
Now Lemma 6.3 from \cite{Lovasz2007} shows that
\[
\int_K \hat{f}(x)\, dx \ge 1 - 32 \frac{\delta^{1/2}n^{1/4}}{a^{1/2}}.
\]
To complete the proof, we observe that for any point $x$,
\[
\ell(x)f(x) \ge \hat{f}(x).
\]
To see this, note that 
\begin{eqnarray*}
\ell(x)f(x) &=& f(x) \frac{\int_{x+\delta B_n} 1\, dy}{\vol(\delta B_n)}\\
&\ge& \frac{\int_{y \in x+\delta B_n: f(y) \le f(x)} f(y)\, dy}{\int_{y \in x+\delta B_n:f(y) \le f(x)} 1 \, dy}\\
&\ge& \hat{f}(x).
\end{eqnarray*}
\end{proof}

We will also show the following bound on the roundness of our distribution.
\begin{lemma}\label{lem:sigma-round}
The Gaussian $N(0,\sigma^2 I_n)$ for $\sigma \le 1$, restricted to $K$ containing a unit ball centered at zero is $\sigma$-rounded.
\end{lemma}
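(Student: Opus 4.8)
The plan is to argue directly from the superlevel sets of $f$. Writing $\gamma_\sigma$ for the density of $N(0,\sigma^2 I_n)$, we have $f\propto\gamma_\sigma\cdot\mathbf{1}_K$, and since $\gamma_\sigma$ is radially decreasing, every superlevel set $L=\{x:f(x)\ge\tau\}$ has the form $L=K\cap rB_n$ for some $r\ge0$ (with $r=0$ vacuous). We must exhibit inside each such $L$ a ball of radius at least $\sigma\,\mu_f(L)$, where $\mu_f(L)=\gamma_\sigma(L)/\gamma_\sigma(K)\le1$, and I would split on whether $r\ge1$.

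If $r\ge1$, then $B_n\subseteq K$ and $B_n\subseteq rB_n$ give $B_n\subseteq L$, so $L$ contains a ball of radius $1\ge\sigma\ge\sigma\,\mu_f(L)$ and we are done. If $r<1$, then $rB_n\subseteq B_n\subseteq K$, so $L=rB_n$ is itself a ball of radius $r$, and it suffices to check $\mu_f(rB_n)\le r/\sigma$; this is immediate when $r\ge\sigma$ since $\mu_f\le1$, so the only content is the regime $r<\sigma\le1$. There I would first use $K\supseteq B_n$ to write $\mu_f(rB_n)=\gamma_\sigma(rB_n)/\gamma_\sigma(K)\le\gamma_\sigma(rB_n)/\gamma_\sigma(B_n)$, then rescale by $x\mapsto x/\sigma$ to pass to the standard Gaussian $\gamma_1$: with $\rho:=r/\sigma\in(0,1)$ this gives $\mu_f(rB_n)\le\gamma_1(\rho B_n)/\gamma_1((1/\sigma)B_n)\le\gamma_1(\rho B_n)/\gamma_1(B_n)$, the last inequality because $1/\sigma\ge1$.

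Everything then reduces to the one-dimensional estimate $\gamma_1(\rho B_n)\le\rho\,\gamma_1(B_n)$ for all $\rho\in(0,1]$, and this is the step I expect to need the most care. I would get it by showing that $\rho\mapsto\gamma_1(\rho B_n)/\rho$ is nondecreasing on $(0,1]$ and evaluating at $\rho=1$: since $\frac{d}{d\rho}\gamma_1(\rho B_n)=\omega_n(\rho)$, the density of $\|Y\|$ for $Y\sim N(0,I)$, and $\log\omega_n(t)=\mathrm{const}+(n-1)\log t-t^2/2$ is nondecreasing for $t\le\sqrt{n-1}$, the density $\omega_n$ is nondecreasing on $(0,1]$ once $n\ge2$; hence $\gamma_1(\rho B_n)=\int_0^\rho\omega_n(t)\,dt\le\rho\,\omega_n(\rho)=\rho\,\frac{d}{d\rho}\gamma_1(\rho B_n)$, which is exactly $(\gamma_1(\rho B_n)/\rho)'\ge0$. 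Assembling the two cases shows $f$ is $\sigma$-rounded. One caveat worth flagging: the displayed one-dimensional bound genuinely uses $n\ge2$ (for $n=1$ it fails as $\rho\to0$), but this is harmless here, where $n$ is large.
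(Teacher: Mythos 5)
Your proposal is correct and follows essentially the same route as the paper: both reduce to the level sets being balls intersected with $K$, compare against the ball of radius $\sigma$ (you rescale to the standard Gaussian and compare with $B_n$, which is the same thing), and prove the key one-dimensional bound from the fact that the radial density $t^{n-1}e^{-t^2/(2\sigma^2)}$ is nondecreasing on $[0,\sigma]$ for $n\ge 2$ --- the paper phrases this as convexity of the normalized radial mass function $g$ via $g''\ge 0$, which is the same fact you use to show $\gamma_1(\rho B_n)/\rho$ is nondecreasing. Your explicit case split on $r$ and the comparison $\gamma_\sigma(K)\ge\gamma_\sigma(B_n)$ merely spell out steps the paper leaves implicit, and you flag the same $n\ge 2$ caveat the paper does.
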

\begin{proof}
The level sets of the distribution are balls restricted to $K$. For the distribution to be $\sigma$-rounded, we need that a level set of measure $t$ contains a ball of radius $t \sigma$. Consider the following function of $t$:
\[
g(t) = \frac{\int_0^t e^{-x^2/(2\sigma^2)}x^{n-1}dx}{ \int_0^\sigma e^{-x^2/(2\sigma^2)}x^{n-1}dx}.
\]
Consider the second derivative of $g(t)$:
\begin{align*}
g''(t) = (\sigma^2(n-1)-t^2)\cdot\frac{t^{n-2}e^{-t^2/(2\sigma^2)}}{ \int_0^\sigma e^{-x^2/(2\sigma^2)}x^{n-1}dx}
\end{align*}

For $g''(t)$ to be non-negative, we need $(\sigma^2(n-1)-t^2)\geq 0$, which it is for $n \geq 2, t\in [0,\sigma]$. Since $g(0)=0, g(\sigma)=1$, and the second derivative is non-negative, we then have that $g(\sigma t)\leq t$ for $t \in [0,1]$, which proves the lemma.
\end{proof}

\begin{corollary}
If $\delta \leq \sigma/(4096\sqrt{n})$, then the average local conductance, $\lambda(f)$, for the density function $f$ proportional to the Gaussian $N(0,\sigma^2 I_n)$ restricted to $K$, is at least $1/2$.
\end{corollary}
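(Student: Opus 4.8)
The plan is to simply chain together the two preceding lemmas. First I would observe that the target density $f$, proportional to the Gaussian $N(0,\sigma^2 I_n)$ restricted to $K$, is logconcave: it is the product of the (logconcave) Gaussian density with the indicator of the convex set $K$. Next, since $\sigma \le 1$ and $K$ contains the unit ball centered at the origin, Lemma~\ref{lem:sigma-round} tells us that $f$ is $\sigma$-rounded. Thus the hypotheses of Lemma~\ref{lem:alc} are met with roundness parameter $a = \sigma$.

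Applying Lemma~\ref{lem:alc} gives
\[
\lambda(f) \ge 1 - 32\,\frac{\delta^{1/2} n^{1/4}}{\sigma^{1/2}}.
\]
It remains to plug in the hypothesis $\delta \le \sigma/(4096\sqrt{n})$. Taking square roots and using $4096 = 64^2$ and $(\sqrt n)^{1/2} = n^{1/4}$, we get $\delta^{1/2} \le \sigma^{1/2}/(64\, n^{1/4})$, hence
\[
32\,\frac{\delta^{1/2} n^{1/4}}{\sigma^{1/2}} \le 32 \cdot \frac{1}{64} = \frac12,
\]
so $\lambda(f) \ge 1/2$, as claimed.

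\textbf{Where the work really is.} There is essentially no difficulty in the corollary itself; it is a one-line substitution. The only points worth double-checking are bookkeeping: that the $\delta$ appearing in the definition of $\ell(x)$ (and hence of $\lambda(f)$) is the same $\delta$ constrained by the hypothesis, and that Lemma~\ref{lem:sigma-round} — whose proof assumes $n \ge 2$ — covers the regime we need, so that the trivial case $n = 1$ is either excluded or handled separately. All of the genuine content has already been done in Lemmas~\ref{lem:alc} and~\ref{lem:sigma-round} (and, upstream of those, in Lemma~6.3 of \cite{Lovasz2007}).
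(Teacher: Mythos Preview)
Your proposal is correct and follows essentially the same approach as the paper: invoke Lemma~\ref{lem:sigma-round} to get $a=\sigma$, plug into Lemma~\ref{lem:alc}, and simplify using $\delta \le \sigma/(4096\sqrt{n})$ and $4096 = 64^2$. Your added remarks on logconcavity of $f$ and the $n\ge 2$ caveat are reasonable bookkeeping but do not alter the argument.
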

\begin{proof}
Using Lemma \ref{lem:alc} and Lemma \ref{lem:sigma-round}, we have that
\[
\lambda(f) \geq 1-32\frac{\sigma^{1/2}n^{1/4}}{C^{1/2}n^{1/4}\sigma^{1/2}} = 1-32\frac{1}{C^{1/2}}.
\]

We can select $C=4096$, which proves the corollary.
\end{proof}




\begin{lemma}\label{lem:ballsteps}
If the average local conductance is at least $\lambda$, $M(Q_0,Q) \leq M$, and the speedy walk takes $t$ steps, then the expected number of steps of the corresponding ball walk is at most
$
Mt/\lambda.
$
\end{lemma}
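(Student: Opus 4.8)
\noindent\textbf{Proof strategy for Lemma \ref{lem:ballsteps}.}
The plan is to exploit the exact coupling between the two chains: the speedy walk is the ball walk with its non-proper steps deleted, where a step of the ball walk from $x$ is \emph{proper} when the proposed point $y\in x+\delta B_n$ happens to lie in $K$. Since $y$ is uniform in $x+\delta B_n$, a ball walk step from $x$ is proper with probability exactly $\ell(x)=\vol(K\cap(x+\delta B_n))/\vol(\delta B_n)$, and on the complementary event the walk stays at $x$. Hence, conditioned on currently being at $x$, the number of ball walk steps until the next proper step — equivalently, the next step of the speedy walk, whether or not its Metropolis filter then accepts — is geometric with mean $1/\ell(x)$.

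First I would fix notation. Run the ball walk $w_0,w_1,\dots$ with $w_0\sim Q_0$; set $z_0=w_0$ and let $z_j$ be the state immediately after the $j$-th proper step, and let $N_j\ge 1$ be the number of ball walk steps between the realizations of $z_{j-1}$ and $z_j$, the proper step included. By construction $(z_0,z_1,\dots)$ has exactly the law of the speedy walk started from $Q_0$, and $\E[N_j\mid z_{j-1}]=1/\ell(z_{j-1})$. Since the speedy walk takes a fixed number $t$ of steps, completing it costs $N_1+\dots+N_t$ ball walk steps, so by linearity and the tower property
\[
\E\!\Big[\sum_{j=1}^{t}N_j\Big]=\sum_{j=1}^{t}\E\big[1/\ell(z_{j-1})\big]=\sum_{j=0}^{t-1}\E_{Q_j}\!\big[1/\ell\big],\qquad Q_j:=Q_0\,P_{\mathrm{speedy}}^{\,j}.
\]

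Next I would bound each term by $M/\lambda$. By Lemma \ref{lem:speedystationary}, the speedy walk with the Metropolis filter for $f$ has stationary distribution $Q$ with density proportional to $\ell(x)f(x)$. Warmness relative to a stationary distribution cannot increase under a Markov step: if $d\mu/dQ\le M$ then $(\mu P)(S)=\int P(x,S)\,d\mu(x)\le M\int P(x,S)\,dQ(x)=M\,Q(S)$ for every $S$, so $M(Q_j,Q)\le M(Q_0,Q)\le M$ for all $j$, i.e. $dQ_j/dQ\le M$ a.e. Since $Q$ has density $\propto\ell f$,
\[
\E_{Q_j}\!\big[1/\ell\big]\le M\,\E_Q\!\big[1/\ell\big]=M\cdot\frac{\int_K(1/\ell(x))\,\ell(x)f(x)\,dx}{\int_K\ell(x)f(x)\,dx}=M\cdot\frac{\int_K f(x)\,dx}{\int_K\ell(x)f(x)\,dx}=\frac{M}{\lambda(f)}\le\frac{M}{\lambda},
\]
and summing over $j=0,\dots,t-1$ gives the claimed bound $Mt/\lambda$ on $\E[\sum_j N_j]$.

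I do not expect a serious obstacle here; the care needed is in the bookkeeping of the coupling. One must verify that the recorded subsequence really is distributed as the speedy walk — in particular that a Metropolis rejection at a proper step costs exactly one ball walk step, and that the wasted steps preceding the first proper step are correctly included — and it is worth stating explicitly that the distribution $Q$ against which warmness is measured must be the stationary distribution of the speedy walk (density $\propto\ell f$), which is what makes warmness monotone along the speedy-walk trajectory and makes $\E_Q[1/\ell]$ equal $1/\lambda(f)$ exactly. (If one instead wants a high-probability, rather than in-expectation, statement, the sum of the $N_j$ is handled by a Wald-type identity using that the $N_j$ are conditionally independent geometrics given the trajectory $(z_0,z_1,\dots)$.)
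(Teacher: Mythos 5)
Your proposal is correct and follows essentially the same route as the paper's proof: decompose the ball walk into wasted steps between proper steps (expected $1/\ell(x)$ per speedy step), use that warmness with respect to the speedy walk's stationary distribution $Q\propto\ell f$ is preserved under each step so $M(Q_j,Q)\le M$, and bound $\E_{Q_j}[1/\ell]\le M\,\E_Q[1/\ell]=M/\lambda(f)\le M/\lambda$, finishing by linearity of expectation. Your write-up just makes the coupling and the identity $\E_Q[1/\ell]=1/\lambda(f)$ more explicit than the paper does.
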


\begin{proof}
Since $M(Q_0,Q)\leq M$, we have that for all $S \subseteq K$,
\[
Q_0(S) \leq M Q(S),
\]

and by induction on $i$, we get that
\[
Q_i(S) = \int_K P_x(S) dQ_{i-1}(x) \leq M\int_K P_x(S) dQ(x) = MQ(S).
\]

For any point $x$, the expected number of steps until a proper step is made is $1/\ell(x)$. So, given a point from $Q_i$, the expected number of steps to obtain a point from $Q_{i+1}$ is 
\[
\int_K \frac{1}{\ell(x)}dQ_i(x) \leq M\int_K \frac{1}{\ell(x)}dQ(x) \leq \frac{M}{\lambda} \int_K d\hat Q (x) = \frac{M}{\lambda},
\]

where $\hat Q$ is the corresponding distribution for the ball walk with a Metropolis filter (i.e., with stationary distribution proportional to $f(x)$). If the speedy walk took $t$ steps, then by linearity of expectation, the expected number of steps for the ball walk is at most $Mt/\lambda$.
\end{proof}


We can now prove Theorem \ref{thm:mixing}.\\

\begin{proof}(of Theorem \ref{thm:mixing}.)
By Theorem \ref{thm:speedyconductance}, we have that the conductance of the speedy walk is $\Omega(\delta/\sigma\sqrt{n}) = \Omega(1/n)$ if $\delta \le 1/(4096\sqrt{n})$. Then, from Theorem \ref{thm:mixingtime}, we have that the speedy walk will take $C' n^2 \log (M(Q_0,Q) \cdot 1/\eps)$ steps until the variation distance is at most $\eps$. 

From Lemma \ref{lem:ballsteps}, we see that the expected number of steps for the corresponding ball walk is at most $2 M(Q_0,Q) C'n^2\log(M(Q_0,Q) \cdot 1/\eps) = C_1 M(Q_0,Q) n^2 \log(1/\eps)$. To obtain a guaranteed number of steps with some failure probabilty $p$, if we run the ball walk for twice the expected number of steps, we will achieve the required number of speedy steps with probability $1/2$ by Markov's inequality. We can then repeat $\log(1/p)/\log(2)$ times, but concatenated together as one long chain of steps (since using a mixed point can only help). The overall probability of failure is therefore $p$.
\end{proof}

\subsection{Warm start}
In order for the walk to converge quickly in each phase in total variation distance, we need the point from the previous phase will give a warm start for the current phase. In our algorithm, the $M$-warmness will be bounded by a constant. 

\begin{lemma}\label{lem:warmstart}
Let $f_i(x)$ be a function restricted to $K$ with density $e^{-a_i\|x\|^2}$ where $a_i = a_{i-1}(1-1/n)$, and $Q_i$ be a probability distribution proportional to $f_i(x)/\int_K f_i(x)dx$. Then, we have that 
\[
M(Q_i,Q_{i+1}) \leq \sqrt{e}.
\]
\end{lemma}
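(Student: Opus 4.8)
The plan is to compute $M(Q_i,Q_{i+1})$ almost exactly and then control the one quantity left over, a ratio of normalizing constants, by a change of variables.

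First I would use the standard fact that for probability measures the warmness $\sup_{S\subseteq K}\mu(S)/\nu(S)$ equals the (essential) supremum of the Radon--Nikodym derivative $d\mu/d\nu$, the extremal $S$ being a superlevel set of the derivative. Writing $Z_j=\int_K f_j(x)\,dx$ so that $Q_j$ has density $f_j/Z_j$, this gives
\[
M(Q_i,Q_{i+1})=\sup_{x\in K}\frac{f_i(x)/Z_i}{f_{i+1}(x)/Z_{i+1}}=\frac{Z_{i+1}}{Z_i}\cdot\sup_{x\in K}\frac{f_i(x)}{f_{i+1}(x)}.
\]

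Next I would dispose of the pointwise supremum. Since $f_j(x)=e^{-a_j\|x\|^2}$, we have $f_i(x)/f_{i+1}(x)=e^{(a_{i+1}-a_i)\|x\|^2}$, and because $a_{i+1}=a_i(1-1/n)<a_i$ this exponent is $\le 0$, so the ratio is maximized over all of $\R^n$ — in particular over $K$, which contains the origin — at $x=0$, where its value is $1$. Hence $M(Q_i,Q_{i+1})=Z_{i+1}/Z_i$, and the entire content of the lemma is the bound $Z_{i+1}/Z_i\le\sqrt e$. It is exactly here that we use that the Gaussians are \emph{flattening}, i.e. that $a_i$ is decreasing; for the opposite direction the pointwise ratio would blow up at the boundary.

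Finally I would bound $Z_{i+1}/Z_i$ by rescaling. Substituting $x=y/\sqrt{1-1/n}$ in $Z_{i+1}=\int_K e^{-a_i(1-1/n)\|x\|^2}\,dx$ yields
\[
Z_{i+1}=(1-1/n)^{-n/2}\int_{\sqrt{1-1/n}\,K}e^{-a_i\|y\|^2}\,dy\le(1-1/n)^{-n/2}\,Z_i,
\]
where the inequality uses that $K$, being convex and containing $0$, is star-shaped about the origin, so $\sqrt{1-1/n}\,K\subseteq K$. Thus $M(Q_i,Q_{i+1})\le(1-1/n)^{-n/2}$, which is $\sqrt e$ up to a $1+O(1/n)$ factor (the usual comparison of $(1-1/n)^{-n}$ with $e$) — the form recorded in the lemma. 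I do not anticipate a serious obstacle: the proof is short, and the only points needing care are the two structural observations above (that the warmness is a supremum of the density ratio, and that this ratio peaks at the origin precisely because $a_i>a_{i+1}$), together with noticing that mere star-shapedness of $K$ is all the scaling step uses. If one wants the bound for the density actually sampled in the algorithm — $f_j$ restricted to $K\cap 4\sigma_j\sqrt n B_n$ — the identical computation applies with $K$ replaced by these (nested, since $\sigma_j$ increases) sets, and it only improves the constant.
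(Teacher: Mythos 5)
Your proof is correct and shares the paper's first reduction --- both arguments observe that the pointwise density ratio $f_i/f_{i+1}=e^{-(a_i-a_{i+1})\|x\|^2}$ is at most $1$ on $K$ (maximized at the origin, which lies in $K$), so the warmness reduces to the ratio of normalizing constants $A=Z_{i+1}/Z_i$ --- but you bound $A$ by a genuinely different and shorter route. The paper computes the same ratio over all of $\R^n$ exactly, $(a_i/a_{i+1})^{n/2}=(1-1/n)^{-n/2}$, and then argues that restricting to $K$ can only decrease it, by decomposing $K$ into infinitesimal cones about the origin and using that the radial weight ratio $e^{(a_i-a_{i+1})r^2}$ is increasing in $r$ while each cone's radial indicator is a decreasing step function. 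Your substitution $x\mapsto\sqrt{1-1/n}\,x$ combined with the inclusion $\sqrt{1-1/n}\,K\subseteq K$ replaces that cone/monotonicity argument with one line, makes explicit that only star-shapedness of $K$ about the origin is used, and (as you note in passing) transfers verbatim to the truncated bodies actually sampled in the algorithm: indeed $\sqrt{1-1/n}\,\bigl(K\cap 4\sigma_{i+1}\sqrt{n}B_n\bigr)\subseteq K\cap 4\sigma_i\sqrt{n}B_n$ exactly, because $\sqrt{1-1/n}\,\sigma_{i+1}=\sigma_i$ --- a cleaner justification than mere nestedness of the truncations. One remark on the constant: both routes terminate at $(1-1/n)^{-n/2}$, which is in fact slightly \emph{larger} than $\sqrt{e}$ (since $(1-1/n)^{-n}>e$); the paper's closing display asserts $(1-1/n)^{-n/2}\le\sqrt{e}$ with the inequality effectively reversed, so your candid ``$\sqrt{e}$ up to a $1+O(1/n)$ factor'' is exactly as strong as what the paper itself establishes, and the discrepancy is harmless for the downstream analysis.
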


\begin{proof}
Let
\[
A = \frac{\int_K e^{-a_{i+1}\|x\|^2}dx}{\int_K e^{-a_i \|x\|^2}dx}.
\]
Then, 
\[
M(Q_i,Q_{i+1}) = \sup_{S \subseteq K} \frac{Q_i(S)}{Q_{i+1}(S)} \leq \sup_{x \in K} \frac{Q_i(x)}{Q_{i+1}(x)} = \sup_{x \in K} A \frac{e^{-a_i \|x\|^2}}{e^{-a_{i+1} \|x\|^2}}
\]
\[
= A\cdot \sup_{x \in K} e^{-a_i\|x\|^2/n}=A \text{ since } 0 \in K.
\]

We will now bound $A$. First, we extend $A$ to be over all $\mathbb{R}^n$ instead of $K$, and then argue that it can only decrease when restricted to $K$.
\begin{align*}
\frac{\int_{\mathbb{R}^n} e^{-a_{i+1}\|x\|^2}dx}{\int_{\mathbb{R}^n}e^{-a_i \|x\|^2}dx} &=  \frac{(a_{i}/\pi)^{n/2}}{(a_{i+1}/\pi)^{n/2}}\frac{(a_{i+1}/\pi)^{n/2}}{(a_{i}/\pi)^{n/2}}\frac{\int_{\mathbb{R}^n} e^{-a_{i+1}\|x\|^2}dx}{\int_{\mathbb{R}^n}e^{-a_i \|x\|^2}dx}\\
&=\frac{a_i^{n/2}}{a_{i+1}^{n/2}} = (\frac{1}{1-1/n})^{n/2} \leq \sqrt{e}.
\end{align*}


Let $\mu_K(r)$ be the proportion of the sphere of radius $r$ centered at $0$ that is contained in $K$. Note that since $K$ is a convex body that contains $0$, that $r_1 > r_2 \Rightarrow \mu_{K}(r_1) \leq \mu_{K}(r_2)$. Then,
\[
A = \frac{\int_0^\infty r^{n-1} e^{-a_{i+1} r^2} \mu_K(r) dr}{\int_0^\infty r^{n-1} e^{-a_i r^2} \mu_K(r) dr}.
\]

Note $(r^{n-1}e^{-a_{i+1}r^2})/(r^{n-1}e^{-a_i r^2})$ is a monotonically increasing function with $r$. Since $K$ is a convex body containing $0$, we can partition $K$ into infinitesimally small cones centered at $0$. Consider an arbitrary cone $C$. $\mu_C(r)$ is $1$ for $r \in [0,r']$ and then $0$ for $r \in (r',\infty)$ since $K$ is convex. Since $(r^{n-1}e^{-a_{i+1}r^2})/(r^{n-1}e^{-a_i r^2})$ is monotonically increasing, the integral over the cone only gets larger by extending $\mu_C(r)$ to be $1$ for $r \in [0, \infty)$. Therefore
\[
\frac{\int_0^\infty r^{n-1} e^{-a_{i+1} r^2} \mu_C(r) dr}{\int_0^\infty r^{n-1} e^{-a_i r^2} \mu_C(r) dr} \leq \frac{\int_0^\infty r^{n-1} e^{-a_{i+1} r^2}dr}{\int_0^\infty r^{n-1} e^{-a_i r^2}dr} = \sqrt{e}.
\]
Since $C$ was an arbitrary cone from a partition of $A$, we have that $A \leq \sqrt{e}$.
\end{proof}

\subsection{Complexity of sampling}
We will now give a proof of Theorem \ref{thm:sampling}, which gives the complexity for getting a random point from the standard Gaussian intersected with the convex body $K$.\\

\begin{proof}(of Theorem \ref{thm:sampling})
First we note that using Cor. \ref{cor:iso-conc-bound}, the measure of $K\cap 4\sigma \sqrt{n}B_n$ is at least $1-e^{-n}$ of the measure of $K$.

Next, from Lemma \ref{lem:warmstart}, our algorithm will always have a warm start that is at most a constant. By Theorem \ref{thm:mixing}, alloting a failure probabilty to each sampler call of $O(\eps^2 p /n)$, we have that the number of steps per sampling phase is $O(n^2 \log(n/\eps) \log(n/\eps p))$, with an error parameter $\nu = O((\eps/n)^{15})$. From the algorithm description, the algorithm will have $n(\log (n) + \log \log(1/\eps))$ phases. Assuming that $\eps = \Omega(2^{-n})$, we have that the complexity of the first random point is $O(n^3 \log (n) \log (n/\eps)\log(n/\eps p))$. By Theorem \ref{thm:mixing}, the complexity for each subsequent random point from the standard Gaussian is $O(n^2 \log(1/\eps)\log(n/\eps p))$.
\end{proof}

\section{Analysis of volume algorithm}

In this section, we prove Theorem \ref{thm:volume}. This consists of the following parts:
\begin{enumerate}
\item The starting distribution is concentrated inside $K$.
\item The variance of the ratio estimated in each volume phase is bounded by a constant. 
\item The estimates obtained from multiple samples and across multiple phases are accurate with high probability. This requires bounding the errors due to the mild dependence between samples.
\end{enumerate}
We will establish these properties in subsequent sections, which combined with Theorem \ref{thm:sampling} and Theorem \ref{thm:mixing} complete the proof of the main theorem.

\subsection{Starting distribution}
The following lemma shows that the starting distribution of our algorithm will (1) have a volume that can be approximated by a standard integral computation and (2) can be efficiently sampled from using simple rejection sampling--generate a point from the distribution and reject if it is not in $K$.

\begin{lemma}
If $a \geq (n+\sqrt{8n \ln (1/\eps)})/2$ and $B_n \subseteq K$, then $\int_{K} e^{-a \|x\|^2}dx \geq (1-\eps)\int_{\mathbb{R}^n} e^{-a \|x\|^2}dx$.
\end{lemma}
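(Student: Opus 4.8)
The plan is to reduce the $n$-dimensional statement to a one-dimensional tail estimate for the measure $e^{-a\|x\|^2}$ on the complement of $B_n$, then bound that tail by a Gaussian concentration argument. First I would rescale: writing $a = 1/(2\sigma^2)$, the density $e^{-a\|x\|^2}$ is (up to normalization) the Gaussian $\N(0,\sigma^2 I_n)$, so it suffices to show $\Pr_{\N(0,\sigma^2 I_n)}(x \notin B_n) \le \eps$, i.e. $\Pr(\|x\|^2 \ge 1) \le \eps$, whenever $\sigma^2 \le 1/(n + \sqrt{8n\ln(1/\eps)})$. Equivalently, for $Z \sim \N(0,I_n)$ we need $\Pr(\|Z\|^2 \ge 1/\sigma^2) \le \eps$, and since $1/\sigma^2 \ge n + \sqrt{8n\ln(1/\eps)}$, it is enough to prove the standard chi-squared upper tail $\Pr(\|Z\|^2 \ge n + t) \le \exp(-t^2/(8n))$ for $t = \sqrt{8n\ln(1/\eps)}$ (this requires $t \le$ some multiple of $n$, which holds under the stated assumption $\eps = \Omega(2^{-n})$-type regime, or more simply for $t \le n$).

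Next I would carry out the chi-squared tail bound itself. The clean route is the moment generating function: for $\lambda \in [0,1/2)$, $\E e^{\lambda \|Z\|^2} = (1-2\lambda)^{-n/2}$, so by Markov, $\Pr(\|Z\|^2 \ge n+t) \le (1-2\lambda)^{-n/2} e^{-\lambda(n+t)}$. Optimizing gives $\lambda = t/(2(n+t))$ and the bound $\exp\!\big(-\tfrac{1}{2}(t - n\ln(1+t/n))\big)$; using $\ln(1+u) \ge u - u^2/2$ for $u \ge 0$ yields $t - n\ln(1+t/n) \ge t^2/(2n) - \ldots$, and after tracking constants one gets a bound of the form $e^{-t^2/(8n)}$ for $t$ in the relevant range. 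Plugging $t = \sqrt{8n\ln(1/\eps)}$ makes the exponent $-\ln(1/\eps)$, i.e. the probability is at most $\eps$, which is exactly what is needed. Alternatively, one could just invoke Corollary~\ref{cor:iso-conc-bound} with $f \equiv 1$ (so $h = \gamma$, $\mu = 0$), which directly gives $\Pr_\gamma(\|x\|^2 \ge n + ct\sqrt{n}) \le e^{-t^2}$; choosing $t = \sqrt{\ln(1/\eps)}$ and comparing the thresholds $n + c\sqrt{n\ln(1/\eps)}$ versus $n + \sqrt{8n\ln(1/\eps)}$ handles it provided the absolute constant $c$ from that corollary is at most $\sqrt{8}$ (otherwise one reproves the tail from scratch as above).

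Finally I would assemble: since $\int_{\R^n} e^{-a\|x\|^2}dx = (\pi/a)^{n/2}$ and $B_n \subseteq K$ implies $\int_K \ge \int_{B_n}$, we have $\int_{\R^n} - \int_K \le \int_{\R^n} - \int_{B_n} = \Pr_{\N(0,\sigma^2 I)}(\|x\| \ge 1)\cdot (\pi/a)^{n/2} \le \eps \int_{\R^n} e^{-a\|x\|^2}dx$, which rearranges to the claimed inequality. The only mildly delicate point — the main obstacle, such as it is — is being careful with the constant in the chi-squared tail so that the threshold $n + \sqrt{8n\ln(1/\eps)}$ comes out with exactly the stated coefficient $8$ rather than some larger constant, and making sure the argument does not secretly need an upper bound on $\ln(1/\eps)$ relative to $n$ that the lemma statement doesn't grant; if it does, I would note the harmless additional hypothesis or absorb it since the algorithm only ever uses $\eps$ polynomially small in $n$.
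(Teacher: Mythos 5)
Your proposal is correct and follows essentially the same route as the paper: both identify $e^{-a\|x\|^2}$ with the spherical Gaussian of variance $\sigma^2=1/(2a)$ and apply the norm-concentration bound $\Pr\bigl(\|X\|^2-\sigma^2 n> t\sigma^2\sqrt{n}\bigr)\le e^{-t^2/8}$ with $t=\sqrt{8\ln(1/\eps)}$, concluding via $B_n\subseteq K$. The paper simply cites that tail bound, whereas you also sketch its Chernoff derivation (minor note: to lower-bound $t-n\ln(1+t/n)$ you need $\ln(1+u)\le u-u^2/2+u^3/3$, not the reversed inequality you quoted) and flag the mild restriction on $\ln(1/\eps)$ relative to $n$, which the paper's cited bound implicitly carries as well.
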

\begin{proof}
We will use the following concentration bound on a spherical Gaussian in $\mathbb{R}^n$ with mean $\mu$ and variance $\sigma^2$, which is valid for $t>1$:

\[
\Pr(\|X-\mu\|^2-\sigma^2n > t \sigma^2 \sqrt{n}) \leq e^{-t^2/8}.
\]

Selecting $\mu = 0$, $t=\sqrt{8\ln(1/\eps)}$, and $\sigma^2 = 1/(n+t\sqrt{n})$ gives

\[
\Pr(\|X\|^2 > 1) \leq \eps
\]

and therefore all but an $\eps$-fraction of the Gaussian is contained inside $B_n$ (and therefore $K$). A Gaussian of the form $e^{-a\|x\|^2}$ has variance $\frac{1}{2a}$. Therefore,

\[
\frac{1}{2a} = \frac{1}{n+t\sqrt{n}} \Rightarrow a = \frac{n+\sqrt{8n \ln (1/\eps)}}{2}.
\]

\end{proof}

\subsection{Variance of ratio}
We need to bound the variance of the random variables, $W_i$ in the algorithm, used to estimate the ratios that yield the Gaussian volume. First, we will need the following lemma that is proved in \cite{LV2}:

\begin{lemma} \label{lem:z-logconcave}
Let $K \subseteq \mathbb{R}^n$ be a convex body and $f : K \rightarrow \mathbb{R}$ be a logconcave function. For any $a > 0$, define 

\[
Z(a) = \int_K f(ax) dx.
\]

\noindent
Then $a^nZ(a)$ is a logconcave function of $a$.

\end{lemma}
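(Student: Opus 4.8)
The plan is to reduce the claim to the statement that $a \mapsto \int_{aK} f(y)\,dy$ is logconcave on $(0,\infty)$, and then to derive this from the fact that marginals of logconcave functions are logconcave (the Pr\'ekopa--Leindler theorem, essentially the same tool behind the Localization Lemma used earlier in the paper).

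First I would perform the change of variables $y = ax$ in the definition of $Z$. The Jacobian is $a^{-n}$, so
\[
a^n Z(a) = a^n \int_K f(ax)\,dx = \int_{aK} f(y)\,dy =: g(a),
\]
and it suffices to prove that $g$ is logconcave. Here I extend $f$ by $0$ outside $K$, which keeps it logconcave on $\R^n$; since $K$ is bounded and a concave function on a bounded convex set is bounded above, $g(a)$ is finite for every $a>0$.

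Next I would exhibit $g$ as a marginal of a single function on $\R^{n+1}$. Set
\[
\Omega = \{(a,y) \in (0,\infty) \times \R^n \,:\, y/a \in K\}
\quad\text{and}\quad
F(a,y) = f(y)\,\mathbf{1}_\Omega(a,y),
\]
so that $g(a) = \int_{\R^n} F(a,y)\,dy$. The geometric core of the argument is that $\Omega$ is convex: given $(a_1,y_1),(a_2,y_2)\in\Omega$ and $\lambda\in[0,1]$, put $a = \lambda a_1+(1-\lambda)a_2$ and $\mu = \lambda a_1/a\in[0,1]$; then
\[
\frac{\lambda y_1+(1-\lambda)y_2}{a} = \mu\,\frac{y_1}{a_1} + (1-\mu)\,\frac{y_2}{a_2} \in K
\]
by convexity of $K$, so $(a,\lambda y_1+(1-\lambda)y_2)\in\Omega$. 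Consequently $\mathbf{1}_\Omega$ is logconcave on $\R^{n+1}$, and since $f(y)$ is logconcave as a function of $(a,y)$ as well, the product $F$ is logconcave on $\R^{n+1}$.

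Finally I would invoke Pr\'ekopa--Leindler (equivalently, the statement that the marginal of a logconcave density is logconcave): integrating the jointly logconcave $F$ over the last $n$ coordinates produces a logconcave function of $a$, namely $g$, and hence $a^nZ(a)=g(a)$ is logconcave. The hard part---such as it is---lies entirely in the reformulation: recognizing that $a^nZ(a)$ is the marginal $\int F(a,y)\,dy$ of a jointly logconcave function, which in turn rests on the convexity of the cone-like set $\Omega$ checked above; attempting instead to estimate $(\log g)''$ directly would be considerably messier.
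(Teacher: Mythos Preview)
Your argument is correct. The paper does not actually prove this lemma; it cites \cite{LV2} (Lov\'asz--Vempala) for the proof, and what you wrote is precisely the argument given there: substitute $y=ax$ to turn $a^nZ(a)$ into $\int_{aK}f(y)\,dy$, recognize this as the marginal in $a$ of the jointly logconcave function $f(y)\,\mathbf{1}_{\{y/a\in K\}}$, and apply Pr\'ekopa's theorem. Your verification that the perspective set $\Omega=\{(a,y):y/a\in K\}$ is convex is exactly the point that makes the joint logconcavity work.
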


Using the above lemma, we can bound the variance of the ratio estimators in the main algorithm.

\begin{lemma}\label{lem:bdd-var}
Let $X$ be a random point in $K$ with density proportional to $e^{-a_i \|x\|^2}$, where $i$ is a multiple of $\lfloor \sqrt{n} \rfloor$, $j$ be the index of the distribution with which we are computing our ratio (i.e. $e^{-a_j \|x\|^2)}$), and $Y=e^{(a_j-a_i)\|X\|^2}$. Then, 
\[
\frac{\E(Y^2)}{\E(Y)^2} \leq (\frac{a_i^2}{a_j(2a_i-a_j)})^{n+1} < 8.
\]

\end{lemma}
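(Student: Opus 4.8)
The plan is to compute both $\E(Y)$ and $\E(Y^2)$ as ratios of Gaussian-type integrals over $K$, recognize each such integral as a value $Z(a)$ of the function in Lemma~\ref{lem:z-logconcave}, and then exploit the logconcavity of $a \mapsto a^n Z(a)$ to bound the ratio $\E(Y^2)/\E(Y)^2$ by a clean closed form, which is then bounded by a constant.

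\textbf{Setting up the moments.} Write $c = \int_K e^{-a_i\|x\|^2}\,dx$ for the normalizing constant, so that $X$ has density $e^{-a_i\|x\|^2}/c$. Since $Y = e^{(a_j - a_i)\|X\|^2}$,
\[
\E(Y) = \frac{1}{c}\int_K e^{(a_j-a_i)\|x\|^2} e^{-a_i\|x\|^2}\,dx = \frac{1}{c}\int_K e^{-(2a_i - a_j)\|x\|^2}\,dx,
\]
and similarly
\[
\E(Y^2) = \frac{1}{c}\int_K e^{2(a_j-a_i)\|x\|^2} e^{-a_i\|x\|^2}\,dx = \frac{1}{c}\int_K e^{-(3a_i - 2a_j)\|x\|^2}\,dx.
\]
Now define $Z(a) = \int_K e^{-a^2\|x\|^2}\,dx$, which fits the form of Lemma~\ref{lem:z-logconcave} with $f(x) = e^{-\|x\|^2}$ logconcave (the substitution being $a \leftrightarrow a^2$, i.e. using $Z(\sqrt{b}) = \int_K f(\sqrt{b}\,x)\,dx$ as a function of $b$; alternatively apply the lemma directly with the variable $b = a^2$ to get that $b^{n/2}\int_K e^{-b\|x\|^2}\,dx$ is logconcave in $b$). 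Then $\E(Y) = Z_b(2a_i - a_j)/Z_b(a_i)$ and $\E(Y^2) = Z_b(3a_i - 2a_j)/Z_b(a_i)$, where $Z_b(b) := \int_K e^{-b\|x\|^2}\,dx$, so that
\[
\frac{\E(Y^2)}{\E(Y)^2} = \frac{Z_b(3a_i-2a_j)\, Z_b(a_i)}{Z_b(2a_i-a_j)^2}.
\]

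\textbf{Applying logconcavity.} Let $g(b) = b^{n/2} Z_b(b)$, which is logconcave in $b$ by Lemma~\ref{lem:z-logconcave}. Observe that $a_i$, $2a_i - a_j$, $3a_i - 2a_j$ are in arithmetic progression with common difference $a_i - a_j \ge 0$ (note $a_j \le a_i$ since $j \le i$ and the $a$'s are decreasing; and one must check $3a_i - 2a_j > 0$, which holds since $a_j \le a_i$), with $2a_i - a_j$ the midpoint. Logconcavity of $g$ gives $g(a_i)\,g(3a_i - 2a_j) \le g(2a_i - a_j)^2$. Substituting $g(b) = b^{n/2}Z_b(b)$ and solving for the ratio of the $Z_b$ values yields
\[
\frac{Z_b(3a_i-2a_j)\, Z_b(a_i)}{Z_b(2a_i-a_j)^2} \le \frac{(2a_i-a_j)^n}{a_i^{n/2}(3a_i-2a_j)^{n/2}} = \left(\frac{(2a_i-a_j)^2}{a_i(3a_i-2a_j)}\right)^{n/2}.
\]
Comparing with the claimed bound $\left(a_i^2/(a_j(2a_i-a_j))\right)^{n+1}$: one can check that when $a_j = a_i(1-1/\sqrt{n})^{\lfloor\sqrt{n}\rfloor}$ (the actual gap between consecutive volume phases, since $i$ and $j$ differ by one block of $\lfloor\sqrt{n}\rfloor$ phases and $a_k = a_{k-1}(1-1/n)$ so $a_j = a_i/(1-1/n)^{\lfloor\sqrt n\rfloor}$ — wait, $j < i$ so $a_j > a_i$; I need to recheck the direction), the resulting quantity raised to the $n/2$ (or $n+1$) power is bounded by $8$. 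The cleanest route is: set $a_j/a_i = 1-\theta$ (or $1/(1-\theta)$, whichever matches the algorithm's annealing), note $|a_i - a_j|/a_i = O(1/\sqrt{n})$ since the block has $\lfloor\sqrt n\rfloor$ phases each shrinking $a$ by a factor $1-1/n$, so the ratio-of-ratios $\left(a_i^2/(a_j(2a_i-a_j))\right)$ equals $1/(1-\theta^2) = 1 + O(1/n)$, hence its $(n+1)$-st power is $e^{O(1)}$, and a careful constant-chase shows it is below $8$.

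\textbf{Expected main obstacle.} The substantive step is matching the exponent and the constant: I need to verify that the logconcavity argument actually delivers the stated form $(a_i^2/(a_j(2a_i-a_j)))^{n+1}$ rather than the form $((2a_i-a_j)^2/(a_i(3a_i-2a_j)))^{n/2}$ I derived, and these look different. The resolution is presumably that the paper applies Lemma~\ref{lem:z-logconcave} slightly differently — possibly by also using that $\E(Y) \ge$ the unrestricted-$\R^n$ value (or a monotonicity/cone argument as in Lemma~\ref{lem:warmstart}) to replace one of the $Z$-ratios by its exact Gaussian value, which introduces the $a_i^2/(a_j(2a_i - a_j))$ combination directly. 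So the key steps in order are: (1) write $\E(Y)$ and $\E(Y^2)$ as ratios of $\int_K e^{-b\|x\|^2}dx$; (2) invoke Lemma~\ref{lem:z-logconcave} to get the logconcavity of the relevant one-parameter family; (3) possibly combine with a comparison to the full-space Gaussian integral to get the exact algebraic form; (4) plug in $a_j/a_i = (1-1/n)^{-\lfloor\sqrt n\rfloor}$ and bound the resulting $(1+O(1/n))^{n+1}$ by $8$ via a direct estimate. The main risk is bookkeeping errors in the direction of the inequality $a_j$ vs.\ $a_i$ and in the exact value of the annealing gap, which control whether $2a_i - a_j > 0$ and $3a_i - 2a_j > 0$ as needed for all the exponents to make sense.
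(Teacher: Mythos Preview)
Your approach is exactly the paper's: write $\E(Y)$ and $\E(Y^2)$ as ratios of integrals $\int_K e^{-b\|x\|^2}\,dx$, then apply the logconcavity of Lemma~\ref{lem:z-logconcave} to an arithmetic progression in the exponent~$b$. The discrepancy you noticed between your form $\big((2a_i-a_j)^2/(a_i(3a_i-2a_j))\big)^{n/2}$ and the paper's $\big(a_i^2/(a_j(2a_i-a_j))\big)^{n+1}$ is not due to any missing ingredient; in particular, no comparison with the unrestricted Gaussian over $\R^n$ is used, so your ``expected main obstacle'' is a red herring.

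The mismatch comes from a labeling slip in the paper. The proof actually computes
\[
\E(Y)=\frac{\int_K e^{-a_i\|x\|^2}\,dx}{\int_K e^{-a_j\|x\|^2}\,dx},\qquad
\E(Y^2)=\frac{\int_K e^{-(2a_i-a_j)\|x\|^2}\,dx}{\int_K e^{-a_j\|x\|^2}\,dx},
\]
which corresponds to $X$ having density proportional to $e^{-a_j\|x\|^2}$ (the sampling phase in the algorithm), not $e^{-a_i\|x\|^2}$ as the statement literally says. With this convention the three exponents are $a_j,\ a_i,\ 2a_i-a_j$, an arithmetic progression with midpoint $a_i$, and logconcavity of $a\mapsto a^{n+1}\int_K e^{-a\|x\|^2}\,dx$ gives directly
\[
\frac{\E(Y^2)}{\E(Y)^2}=\frac{\int_K e^{-a_j\|x\|^2}\,dx\,\int_K e^{-(2a_i-a_j)\|x\|^2}\,dx}{\big(\int_K e^{-a_i\|x\|^2}\,dx\big)^2}\le\Big(\frac{a_i^2}{a_j(2a_i-a_j)}\Big)^{n+1}.
\]
On the direction you hesitated over: here $j$ is the earlier (sampling) index and $i$ the later one, so $a_j>a_i$, and the annealing gives $a_i\ge a_j(1-1/n)^{\lfloor\sqrt n\rfloor}\ge a_j(1-1/\sqrt n)$. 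Plugging in yields $a_i^2/(a_j(2a_i-a_j))\le 1+1/(n-2\sqrt n)$, whose $(n+1)$st power is below $8$. Your own computation, which takes the statement at face value and lands on the progression $(a_i,\,2a_i-a_j,\,3a_i-2a_j)$, is also correct and would give a constant of the same order; the only real issue in your write-up is the unresolved $i$/$j$ bookkeeping and the unnecessary speculation about an extra step.
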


\begin{proof}
We have
\[
\E(Y) = \frac{\int_K e^{-a_i \|x\|^2}dx}{\int_K e^{-a_j \|x\|^2}dx}
\]

and
\[
\E(Y^2) = \frac{\int_K e^{-(2a_i - a_j)\|x\|^2}dx}{\int_K e^{-a_j\|x\|^2}dx}.
\]

By Lemma \ref{lem:z-logconcave}, the function $a^{n+1} \int_K e^{-a\|x\|^2}dx$ is logconcave, and using  $f(x)f(y) \leq f((x+y)/2)^2$ for any logconcave function $f(\cdot)$ and $x,y \in \R^n$,
\[
\int_K e^{-a_j \|x\|^2}dx\int_K e^{-(2a_i - a_j)\|x\|^2} dx \leq (\frac{a_i^2}{a_j(2a_i - a_j)})^{n+1}(\int_K e^{-a_i\|x\|^2}dx)^2.
\]

In our algorithm, we have that
\[
a_i \geq a_j(1-\frac{1}{n})^{\lfloor \sqrt{n} \rfloor} \geq a_j (1-\frac{\lfloor \sqrt{n} \rfloor}{n}) \geq a_j (1-\frac{1}{\sqrt{n}}).
\]

Therefore, 
\[
\frac{\E(Y^2)}{\E(Y)^2} \leq (\frac{a_i^2}{a_j(2a_i-a_j)})^{n+1} \leq (1+\frac{1}{n-2\sqrt{n}})^{n+1},
\]

which implies the lemma.
\end{proof}

\subsection{Accuracy of computation}
We need to show that the answer computed by our algorithm is accurate by examining the error introduced at each phase and the dependence between sample points. We would like for all of our samples to be independent. However, each sample must have a warm start, which introduces a small dependence in our samples. We need to quantify this dependence and show how much it hurts our answer. We show that we are within the desired accuracy with probability $4/5$, and then allot $1/20$ failure probability to the sampling phases to obtain an overall probability of success of $3/4$. We can then use the standard trick to boost the probability of success to $1-\delta, \delta>0$ by repeating the experiment $O(\log(1/\delta))$ times and taking the median of the trials.

For two random variables $X,Y$, we will measure their independence by the following:
\[
\mu (X,Y) = \sup_{A,B} |P(X \in A, Y \in B) - P(X \in A)P(Y \in B)|,
\]
where $A,B$ range over measurable subsets of the ranges of $X,Y$.

We will give an argument similar to \cite{LV2}, and use the following lemmas that were proved there:

\begin{lemma}\label{lem:fn-indep}
If $f$ and $g$ are two measurable functions, then 
\[
\mu (f(X), g(Y)) \leq \mu (X,Y).
\]
\end{lemma}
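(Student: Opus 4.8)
The plan is a direct pullback argument: applying measurable functions to $X$ and $Y$ can only coarsen the collections of test sets over which the supremum defining $\mu$ is taken, and a supremum over a smaller collection cannot increase.

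First I would fix arbitrary measurable subsets $A'$ and $B'$ of the ranges of $f(X)$ and $g(Y)$ respectively. Since $f$ and $g$ are measurable, the preimages $A := f^{-1}(A')$ and $B := g^{-1}(B')$ are measurable subsets of the ranges of $X$ and $Y$. The events $\{f(X) \in A'\}$ and $\{X \in A\}$ coincide, and likewise $\{g(Y) \in B'\} = \{Y \in B\}$. Hence
\[
\bigl| P(f(X)\in A',\, g(Y)\in B') - P(f(X)\in A')\,P(g(Y)\in B') \bigr| = \bigl| P(X\in A,\, Y\in B) - P(X\in A)\,P(Y\in B) \bigr| \le \mu(X,Y),
\]
where the last inequality is immediate from the definition of $\mu(X,Y)$, since $A$ and $B$ are among the sets over which that supremum ranges.

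Finally I would take the supremum of the left-hand side over all measurable $A'$ and $B'$; the bound $\mu(X,Y)$ on the right does not depend on the chosen sets, so this yields $\mu(f(X),g(Y)) \le \mu(X,Y)$, as claimed. There is essentially no serious obstacle here; the only point requiring any care is that $f^{-1}$ and $g^{-1}$ carry measurable sets to measurable sets, which is precisely the hypothesis that $f$ and $g$ are measurable. No integrability or regularity of the joint law of $(X,Y)$ is needed, and in particular the argument does not use logconcavity.
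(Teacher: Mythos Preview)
Your argument is correct; it is precisely the standard pullback proof that for any measurable $A',B'$ one has $\{f(X)\in A'\}=\{X\in f^{-1}(A')\}$ and $\{g(Y)\in B'\}=\{Y\in g^{-1}(B')\}$, so the supremum defining $\mu(f(X),g(Y))$ ranges over a subcollection of the pairs admissible for $\mu(X,Y)$. The paper does not supply its own proof of this lemma but cites \cite{LV2}; your write-up is exactly the argument one expects there and requires no changes.
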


\begin{lemma}\label{lem:cov-bd}
Let $X,Y$ be random variables such that $0 \leq X \leq a$ and $0 \leq Y \leq b$. Then
\[
|\E(XY)-\E(X)E(Y)| \leq ab\mu(X,Y).
\]
\end{lemma}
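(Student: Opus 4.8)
The plan is to use the ``layer-cake'' (distribution-function) representation of nonnegative bounded random variables, which turns the covariance into an integral of pointwise dependence quantities, each of which is controlled by $\mu(X,Y)$.

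First I would write, using $0 \le X \le a$ and $0 \le Y \le b$,
\[
X = \int_0^a \mathbf{1}[X > s]\, ds, \qquad Y = \int_0^b \mathbf{1}[Y > t]\, dt,
\]
so that $XY = \int_0^a \int_0^b \mathbf{1}[X>s]\,\mathbf{1}[Y>t]\, ds\, dt$. Taking expectations and interchanging $\E$ with the two Lebesgue integrals (justified below), I get
\[
\E(XY) = \int_0^a\!\!\int_0^b \Pr(X>s,\,Y>t)\, ds\, dt, \qquad \E(X)\,\E(Y) = \int_0^a\!\!\int_0^b \Pr(X>s)\Pr(Y>t)\, ds\, dt.
\]
Subtracting and bounding the absolute value of an integral by the integral of the absolute value,
\[
\bigl|\E(XY) - \E(X)\,\E(Y)\bigr| \le \int_0^a\!\!\int_0^b \bigl|\Pr(X>s,\,Y>t) - \Pr(X>s)\Pr(Y>t)\bigr|\, ds\, dt.
\]
Then, for each fixed $(s,t)$, the event $\{X>s\}$ is of the form $\{X\in A\}$ with $A=(s,\infty)$ and $\{Y>t\}$ is of the form $\{Y\in B\}$ with $B=(t,\infty)$, so the integrand is at most $\mu(X,Y)$ by the definition of $\mu$. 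Integrating the constant $\mu(X,Y)$ over the rectangle $[0,a]\times[0,b]$ gives the claimed bound $ab\,\mu(X,Y)$.

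There is no real obstacle here; the lemma is a clean consequence of the distribution-function representation (this is essentially the proof given in \cite{LV2}). The only point requiring a word of care is the interchange of the expectation with the two one-dimensional integrals over $[0,a]$ and $[0,b]$: since the integrand $\mathbf{1}[X>s]\,\mathbf{1}[Y>t]$ is nonnegative and jointly measurable, this is justified by Tonelli's theorem, and boundedness of $X,Y$ ensures all the quantities involved are finite.
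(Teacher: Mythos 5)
Your proof is correct: the layer-cake identities $\E(XY)=\int_0^a\int_0^b \Pr(X>s,Y>t)\,ds\,dt$ and $\E(X)\E(Y)=\int_0^a\int_0^b \Pr(X>s)\Pr(Y>t)\,ds\,dt$ hold by Tonelli, each integrand difference is bounded by $\mu(X,Y)$ via the events $\{X>s\},\{Y>t\}$, and integrating over $[0,a]\times[0,b]$ gives $ab\,\mu(X,Y)$. The paper does not prove this lemma itself but imports it from \cite{LV2}, where the argument is essentially this same Hoeffding-type representation, so your write-up matches the intended proof.
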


\begin{lemma}\label{lem:exp-bd}
Let $X\geq 0$ be a random variable, $a>0$, and $X' = \min (X,a)$. Then
\[
\E(X') \geq E(X) - \frac{\E(X^2)}{4a}.
\]
\end{lemma}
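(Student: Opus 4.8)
The plan is to reduce the statement to a single pointwise inequality and then integrate. Write the deficit as $X - X' = X - \min(X,a) = (X-a)^+$, where $(t)^+ = \max\{t,0\}$ denotes the positive part; then the claimed bound is equivalent to $\E[(X-a)^+] \le \E(X^2)/(4a)$. I would establish this by showing the stronger, deterministic fact that
\[
(X-a)^+ \le \frac{X^2}{4a} \qquad \text{holds for every realization } X \ge 0,
\]
and then take expectations of both sides, using monotonicity of expectation.

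To verify the pointwise inequality, I would split on whether $X \le a$ or $X > a$. On the event $\{X \le a\}$ the left-hand side is $0$, so there is nothing to prove. On the event $\{X > a\}$ the inequality becomes $4a(X-a) \le X^2$, which rearranges to $0 \le X^2 - 4aX + 4a^2 = (X - 2a)^2$ — true for all real $X$, by completing the square. This is the only computation involved, and it is entirely routine; there is no real obstacle in this lemma. Taking expectations then yields $\E(X) - \E(X') = \E[(X-a)^+] \le \E(X^2)/(4a)$, which is the desired bound. (One should note $\E(X')$ is automatically finite since $0 \le X' \le a$, so the rearrangement $\E(X') = \E(X) - \E[(X-a)^+]$ is valid even when $\E(X)$ or $\E(X^2)$ is infinite, in which case the inequality holds trivially.)
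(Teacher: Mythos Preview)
Your proof is correct. The paper does not actually prove this lemma; it is quoted from \cite{LV2} without argument, so there is nothing to compare against beyond noting that your pointwise inequality $(X-a)^+ \le X^2/(4a)$, verified via $(X-2a)^2 \ge 0$, is exactly the standard route.
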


Now, we will show that the answer we get is within the correct bound with some probability.
\begin{lemma}
With probability at least $4/5$, 
\[
(1-\frac{\eps}{2})R_1 \ldots R_m \le W_1 \ldots W_m \le (1 + \frac{\eps}{2})R_1 \ldots R_m
\]
\end{lemma}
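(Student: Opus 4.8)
The plan is to show that the product $W_1 \cdots W_m$ concentrates around the product $R_1 \cdots R_m$ of the true ratios $R_\alpha = \E(W_\alpha)$ (with each $R_\alpha$ being the ratio $\int_K f_{\alpha\lfloor\sqrt n\rfloor}/\int_K f_{(\alpha-1)\lfloor\sqrt n\rfloor}$ modulo the warm-start perturbation), by controlling the variance of each $W_\alpha$ and the covariances across phases, then applying Chebyshev. First I would recall that each $W_\alpha$ is an average of $k$ copies of the random variable $Y = e^{(a_j - a_i)\|X\|^2}$ analyzed in Lemma \ref{lem:bdd-var}, so for a \emph{single} phase with \emph{independent} samples we would have $\Var(W_\alpha)/\E(W_\alpha)^2 \le (8-1)/k = 7/k$, and with $k = \lceil (512/\eps^2)\sqrt n \ln(n/\eps)\rceil$ this is far smaller than $\eps^2/n$ per phase. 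Since there are $m = O^*(\sqrt n)$ volume phases, a union-bound/second-moment argument over the product would give relative error $O(\eps/\sqrt n)\cdot\sqrt m = O(\eps)$ — but this requires handling two sources of error that make the samples non-ideal.

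The two corrections are: (i) the sample points are not drawn exactly from the target density $f_i$ restricted to $K\cap 4\sigma_i\sqrt n B_n$ but only within total variation distance $\nu = (\eps/8n)^{15}$ of it, and moreover the random variable $Y$ can be large (up to $e^{(a_j-a_i)\cdot 16\sigma_i^2 n}$, which is bounded since $a_i - a_j \le a_j/\sqrt n$ and $\|x\|^2 \le 16\sigma_i^2 n$, i.e. $Y \le e^{O(\sqrt n)}$), so I would invoke Lemma \ref{lem:exp-bd} to truncate $Y$ at a threshold $a = \mathrm{poly}(n/\eps)$, losing only $\E(Y^2)/(4a) \le O(\eps/n)\cdot\E(Y)$ in expectation because $\E(Y^2) \le 8\E(Y)^2$ and $\E(Y)$ itself is $\Theta(1)$ (within $\sqrt e$ of 1, by the monotone-ratio/cone argument behind Lemma \ref{lem:warmstart}); the tiny $\nu$ then shifts the expectation of the truncated variable by at most $a\cdot\nu = O((\eps/n)^{14})$, negligible. (ii) The samples across different phases (and within a phase) are mildly dependent because each sampler is started from a warm point produced by the previous run; I would quantify this via the independence measure $\mu(X,Y)$ and Lemmas \ref{lem:fn-indep} and \ref{lem:cov-bd}. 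The mixing bound Theorem \ref{thm:mixing} with failure/accuracy parameter $\nu$ guarantees $\mu$ between consecutive sample points is at most $\nu$, and Lemma \ref{lem:fn-indep} pushes this bound through the functions $Y_j = e^{(a_j-a_i)\|X_j\|^2}$; then Lemma \ref{lem:cov-bd} bounds each pairwise covariance $|\E(Y_j Y_{j'}) - \E(Y_j)\E(Y_{j'})|$ by $a^2 \mu \le \mathrm{poly}(n/\eps)\cdot\nu$, which by the choice of $\nu$ is much smaller than $\eps^2/(n k^2)$ — small enough that summing over all $O(k^2 m^2)$ pairs still leaves the total excess variance of $\log(W_1\cdots W_m)$ below $\eps^2$.

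Concretely, the steps in order: (1) write $W_1\cdots W_m = \prod_\alpha \frac1k\sum_j Y_j^{(\alpha)}$ and replace each $Y_j^{(\alpha)}$ by its truncation $\tilde Y_j^{(\alpha)} = \min(Y_j^{(\alpha)}, a)$ with $a$ a suitable power of $n/\eps$; using Lemma \ref{lem:exp-bd} and Lemma \ref{lem:bdd-var} show $\E(\tilde Y) \ge (1-O(\eps/n))\E(Y)$, and $\E(\tilde Y) \le \E(Y)$ trivially, so the product of expectations moves by a $(1\pm O(\eps))$ factor overall. (2) Bound $\Var(\tilde W_\alpha)/\E(\tilde W_\alpha)^2 \le 7/k + (\text{pairwise covariance terms within the phase})$; by step on $\mu$, the covariance contribution is $O(k^2 a^2\nu)/k^2 = O(a^2\nu)$, negligible. (3) Bound $\mathrm{Cov}(\tilde W_\alpha, \tilde W_{\alpha'})$ for $\alpha\ne\alpha'$ by $a^2\nu$ via Lemmas \ref{lem:fn-indep}–\ref{lem:cov-bd}, and combine using the multiplicative-error propagation trick from \cite{LV2}: if $Z_\alpha = \tilde W_\alpha/\E(\tilde W_\alpha)$ has $\E(Z_\alpha)$ within $1\pm O(\eps/m)$ of 1 and $\sum_\alpha \Var(Z_\alpha) + \sum_{\alpha\ne\alpha'}|\mathrm{Cov}(Z_\alpha,Z_{\alpha'})| = O(\eps^2)$, then $\E(\prod Z_\alpha)$ is $1\pm O(\eps)$ and $\Var(\prod Z_\alpha) = O(\eps^2)$, so Chebyshev gives $|\prod Z_\alpha - 1| \le \eps/2$ with probability $\ge 4/5$. (4) Translate back to the $R_\alpha$'s, absorbing the truncation and $\nu$-perturbations into the $\eps/2$ budget. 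The main obstacle I expect is step (3): carefully propagating the pairwise-$\mu$ bound through a \emph{product} of $m$ random variables each itself an average of $k$ terms, keeping track that the total number $O(k^2 m^2)$ of covariance terms is still beaten by the super-polynomially small $\nu = (\eps/8n)^{15}$ after multiplying by $a^2$ — this is exactly why $\nu$ is chosen as such a high power, and getting the bookkeeping of exponents right (especially the size of $a$ and the resulting $a^2\nu$) is the delicate part, while everything else is a routine second-moment computation mirroring \cite{LV2}.
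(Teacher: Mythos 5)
Your skeleton is right (Lemma \ref{lem:bdd-var} for the second moment, Lemmas \ref{lem:fn-indep}--\ref{lem:exp-bd} for the mild dependence, a coupling/perturbation argument for the TV error $\nu$, and Chebyshev at the end), but two steps as written are genuinely broken. First, your claim that $\E(Y)$ is $\Theta(1)$ (``within $\sqrt e$ of $1$'') is false: the $\sqrt e$ bound of Lemma \ref{lem:warmstart} is per \emph{single} annealing step, while each volume phase spans $\lfloor\sqrt n\rfloor$ annealing steps, so the phase ratio $R_\alpha=\int_K f_{\alpha\lfloor\sqrt n\rfloor}/\int_K f_{(\alpha-1)\lfloor\sqrt n\rfloor}$ (and hence $\E(Y)$, and the typical value of $Y$ itself) is of order $e^{\sqrt n/2}$. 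Consequently your truncation of $Y$ at an absolute threshold $a=\mathrm{poly}(n/\eps)$ does not lose $O(\eps/n)\E(Y)$; it essentially destroys the estimator, since $\E(Y^2)/(4a)\ge \E(Y)^2/(4a)\gg\E(Y)$. The paper avoids this by truncating the phase estimates \emph{relative to their own expectations}, $V_i=\min\{\bar W_i,\alpha\E(\bar W_i)\}$ with $\alpha=\eps^{1/2}/(8(m\mu)^{1/4})$ — legitimate because the truncation is only an analysis device, so the unknown $\E(\bar W_i)$ may appear in it.

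Second, your step (3) asserts that small $\Var(Z_\alpha)$ plus small pairwise covariances imply $\Var(\prod_\alpha Z_\alpha)=O(\eps^2)$; this is not a valid inference, since the second moment of the product involves $m$-fold joint moments that pairwise bounds do not control. The ``trick from \cite{LV2}'' that you gesture at is precisely the missing machinery: one must peel off one factor at a time, applying Lemma \ref{lem:cov-bd} to the pair (partial product, next factor), and for that the \emph{partial product} must be bounded. The paper therefore recursively truncates it, $U_{i+1}=\min\{U_iV_{i+1},\,\alpha\E(V_1)\cdots\E(V_{i+1})\}$, and proves by induction two-sided bounds on $\E(U_i)$ and an upper bound on $\E(U_i^2)$ with per-phase relative errors $2\mu\alpha^2$ and $2\mu\alpha^4$ ($\mu=3km\nu$), the choice of $\alpha$ balancing the truncation bias $\sim m/\alpha$ against the dependence error $\sim m\mu\alpha^4$; at the end Markov's inequality shows that with probability $1-O(m/\alpha)$ no truncation ever bites, so $U_m=\bar W_1\cdots\bar W_m$, and the coupling (\ref{eqn:exact-chance}) replaces $\bar W$'s by $W$'s. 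Without recursive truncation, your factor-wise truncation only bounds the partial product by $a^{\,i}$, which is exponential in $m\approx\sqrt n$, and $\nu=(\eps/8n)^{15}$ — polynomially, not super-polynomially, small — cannot absorb covariance terms of that size; so the bookkeeping you flag as ``the delicate part'' does in fact fail in the form you propose.
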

\begin{proof}
Let $t$ be the total number of sample points and $(X_0, X_1, X_2,  \ldots , X_t)$ be the total sequence of sample points taken by our algorithm. Then, some subsequence of these points will be the samples used the estimate the volume ratio, $(Z_0^0,Z_1^0, \ldots ,Z_k^0,Z_0^1,Z_1^1, \ldots ,Z_k^1, \ldots ,Z_0^m,Z_1^m, \ldots ,Z_k^m)$ where $Z_i^j$ is the $i$th sample point used for estimating the $j$th volume ratio. 

The distribution of each $X_i$ is approximately the correct distribution, but slightly off based on the error parameter, $\nu$, in each phase that bounded the total variation distance. We will define new random variables $\bar X_i$ that have the correct distribution for each phase.

Note that $X_0$ would be sampled from the exact distribution, and then rejected if outside of $K$. Therefore, $X_0 = \bar X_0$, and $P(X_1 = \bar X_1) \geq 1-\nu$ from the definition of total variation distance. So, by induction, we see that 
\begin{equation}\label{eqn:exact-chance}
\Pr(X_i = \bar X_i, \text{ }  \forall i \in [0,t]) \geq 1-t\nu.
\end{equation}

Let $\bar Z_j^i$ denote a random variable from the correct distribution for $Z_j^i$. Let $z_i = i\lfloor \sqrt{n} \rfloor$, and let 
\[
Y_j^i = e^{(a_{z_i}-a_{z_{i+1})}\| \bar Z_j^i \|^2}.
\]

Then, let 
\[
\bar W_i = \frac{1}{k}\sum_{j=1}^k Y_j^i.
\]

From \cite{LV2}, for every $i$, we have that $\E(\bar W_i) = \E(\bar Y_j^i) = R_i$, and by Lemma \ref{lem:bdd-var}, $E((Y_j^i)^2) \leq 8E(Y_j^i)^2$. Thus,
\begin{equation}\label{eqn:bdd-var}
\E(\bar W_i^2) = \frac{1}{k^2} (\sum_{j=1}^k \E((Y_j^i)^2) + k(k-1)R_i^2) \leq (1+\frac{7}{k})R_i^2.
\end{equation}

If we had independence between our samples, we could use (\ref{eqn:bdd-var}) to prove the lemma. However, since we use previous points as the starting points for the next, the sample points are not independent. We use the following lemma from \cite{LV2} to bound the dependence between our random variables.

\begin{lemma}\label{lem:delta-ind}

(a) For $0 \leq i < t$, the random variables $X_i$ and $X_{i+1}$ are $\nu$-independent, and the random variables $\bar{X}_i$ and $\bar{X}_{i+1}$ are $(3 \nu)$-independent.

(b) For $0 \leq i < t$, the random variables $(X_0, \ldots ,X_i)$ and $X_{i+1}$ are $(3\nu)$-independent.

(c) For $0 \leq i < m$, the random variables $\bar{W}_1 \ldots \bar{W}_i$ and $\bar{W}_{i+1}$ are $(3km\nu)$-independent.
\end{lemma}

The variables $\bar W_i$ are not bounded, but we will introduce a new set of random variables based on $\bar W_i$ that are bounded so we can later apply Lemma \ref{lem:cov-bd}. Let
\[
\alpha = \frac{\eps^{1/2}}{8(m\mu)^{1/4}},
\]
where $\mu = 3km\nu$. Note that $\alpha$ is much larger than one. Define
\[
V_i = \min\{\bar W_i, \alpha \E (\bar W_i) \}.
\]

It is clear that $\E(V_i) \leq \E(\bar W_i)$, and by Lemma \ref{lem:exp-bd}, we also have
\[
\E(V_i) \geq \E(\bar W_i) - \frac{\E(\bar W_i^2)}{4\alpha \E(\bar W_i)} \geq (1 - \frac{1}{4\alpha}(1 + \frac{7}{i})^m)\E(\bar W_i) \geq (1-\frac{1}{2\alpha})\E(\bar W_i).
\]

Let $U_0=1$ and define recursively
\[
U_{i+1} = \min\{U_iV_{i+1}, \alpha \E(V_1) \ldots \E(V_{i+1})\}.
\]

We will now show that 
\begin{equation}\label{eqn:u-bound}
(1-\frac{i-1}{\alpha})\E(V_1) \ldots \E(V_i) \leq \E(U_i) \leq (1 + 2\mu \alpha^2 i)\E(V_1) \ldots \E(V_i).
\end{equation}
By Lemma \ref{lem:fn-indep}, the random variables $U_i$ and $V_{i+1}$ are $\mu$-independent, and by Lemma \ref{lem:cov-bd} and since $\alpha \geq 1$, 
\begin{equation}\label{eqn:uv-bound}
|\E(U_iV_{i+1} - \E(U_i)\E(V_{i+1})| \leq \mu \alpha \E(V_1) \ldots \E(V_i)\alpha \E(\bar W_{i+1}) \leq 2\mu\alpha^2 \E(V_1) \ldots \E(V_{i+1}).
\end{equation}

From (\ref{eqn:uv-bound}), we can get the upper bound on $\E(U_{i+1})$ by induction:
\begin{align*}
\E(U_{i+1}) &\leq \E(U_i V_{i+1}) \leq \E(U_i)\E(V_{i+1}) + 2 \mu \alpha^2 \E(V_1) \ldots \E(V_{i+1})\\
&\leq (1+2\mu \alpha^2(i+1))\E(V_1) \ldots \E(V_{i+1}). \numberthis \label{eqn:u-upperbound}
\end{align*}

Similarly, 
\begin{align}
&\E(U_i^2) \leq (1+2\mu \alpha^4 i) \E(V_1^2) \ldots \E(V_i^2)\label{eqn:u2-bound}\\
\text{and } \>\> &\E(U_i^2V_{i+1}^2) \leq (1 + 2\mu \alpha^4 i)\E(V_1^2) \ldots \E(V_{i+1}^2).\label{eqn:uv2-bound}
\end{align}

For the lower bound, we use Lemma \ref{lem:exp-bd} and (\ref{eqn:uv2-bound}) to get:
\begin{align*}
\E(U_{i+1}) &\geq \E(U_iV_{i+1}) - \frac{\E(U_i^2V_{i+1}^2)}{4\alpha \E(V_1) \ldots \E(V_{i+1})}\\
&\geq \E(U_iV_{i+1}) - (1+2\mu \alpha^4 i)\frac{\E(V_1^2) \ldots \E(V_{i+1}^2)}{4\alpha \E(V_1) \ldots \E(V_{i+1})}. \numberthis \label{eqn:u-lowerbound}
\end{align*}

For $\alpha \geq 3k$, we have that
\begin{align*}
\E(V_i^2) \leq \E(\bar{W}_i^2) &\leq (1+\frac{7}{k})\E(\bar{W}_i)^2\\
&\leq (1+\frac{7}{k})\frac{1}{(1-1/(2\alpha))^2}\E(V_i)^2\\
&\leq (1+\frac{7}{k})(1+\frac{1}{2k})\E(V_i)^2\\
&\leq (1+\frac{8}{k})\E(V_i)^2. \numberthis \label{eqn:v2-bound}
\end{align*}

Combining (\ref{eqn:uv-bound}), (\ref{eqn:u-lowerbound}), and (\ref{eqn:v2-bound}), 
\begin{align*}
\E(U_{i+1}) &\geq \E(U_i V_{i+1}) - \frac{1}{4\alpha}(1+2\mu\alpha^4i)(1+\frac{8}{k})^i\E(V_1) \ldots \E(V_{i+1})\\
&\geq \E(U_iV_{i+1})-\frac{1+2\mu\alpha^4i}{2\alpha}\E(V_1) \ldots \E(V_{i+1})\\
&\geq \E(U_i)\E(V_{i+1}) - \frac{1}{\alpha}\E(V_1) \ldots \E(V_{i+1}).
\end{align*}

Then, by induction on $i$,
\begin{equation}\label{eqn:u-lowerbound2}
\E(U_{i+1}) \geq \E(V_1) \ldots \E(V_{i+1}) - \frac{i}{\alpha}\E(V_1) \ldots \E(V_{i+1}).
\end{equation}

Putting (\ref{eqn:u-upperbound}) and (\ref{eqn:u-lowerbound2}) together, we now have a proof of (\ref{eqn:u-bound}). Thus,
\[
\E(U_m) \leq (1 + \frac{\eps}{4}) \E(V_1) \ldots \E(V_m) \leq (1 + \frac{\eps}{4})\E(\bar{W}_1) \ldots \E(\bar{W}_m).
\]

We also have that $\alpha \geq 4m/\eps$ implies 
\[
\E(U_m) \geq (1-\frac{\eps}{4})\E(\bar{W}_1) \ldots \E(\bar{W}_m).
\]

From (\ref{eqn:u2-bound}) and (\ref{eqn:v2-bound}), and the selection of $\alpha$, $\mu$, and the lower bound on $k$, we have that 
\begin{align*}
\E(U_m^2) &\leq (1+2\mu \alpha^4 m)\E(V_1^2) \ldots \E(V_m^2)\\
&\leq (1+2\mu \alpha^4 m)(1+\frac{8}{k})^m\E(V_1)^2 \ldots \E(V_m)^2\\
&\leq (1+2\mu \alpha^4 m)(1+\frac{8}{k})^m\frac{1}{(1-(m-1)/\alpha)^2}\E(U_m)^2\\
&\leq (1 + \frac{\eps^2}{64})E(U_m)^2,
\end{align*}

and hence

\[
\Pr\big (|U_m - \E(U_m)| \leq \frac{\eps}{2}\E(\bar{W}_1) \ldots \E(\bar{W}_m)\big ) \geq 0.9
\]
 by Chebyshev's inequality. Then, applying Markov's inequality,
\[
\Pr(U_{i+1} \neq U_iV_{i+1}) = \Pr \big (U_iV_{i+1} > \alpha \E(V_1) \ldots \E(V_{i+1})\big )\leq\frac{2}{\alpha}
\]
and similarly
\[
\Pr(V_i \neq \bar{W}_i)\leq \frac{1}{\alpha}.
\]

So, with probability at least $1-3k/\alpha$, we have $U_m = \bar{W}_1 \ldots \bar{W}_m$. Also, from (\ref{eqn:exact-chance}), we have that $\bar{W}_1 \ldots \bar{W}_m = W_1 \ldots W_m$ with probability at least $1 - 2km\nu$. Note that $\E(\bar{W}_1) \ldots \E(\bar{W}_m) = R_1 \ldots R_m$. Therefore, with probability at least $4/5$
\begin{align*}
|W_1 \ldots W_m - R_1\ldots R_m| \leq \frac{\eps}{2}R_1 \ldots R_m,
\end{align*}
which proves the lemma.
\end{proof}

\section{Concluding remarks}
\begin{enumerate}
\item The focus of this paper was integrating a Gaussian over the convex region, but an important problem in statistics and probability is integrating a function $f$ with respect to the Gaussian density. For logconcave $f$, the isoperimetric inequality of Theorem \ref{thm:iso} still applies. We expect that the rest of our analysis can also be extended to this setting. 
\item For a nonspherical Gaussian, the algorithm complexity should grow as $O^*(\Lambda^2 n^3)$, where $\Lambda^2$ is the largest variance of the Gaussian in any direction.
\item Can the upper bound for hit-and-run be improved for Gaussian sampling? An important ingredient that is missing is an analog of Theorem \ref{thm:iso} for the cross-ratio distance $d_K$ relevant to hit-and-run. An analysis of hit-and-run could have two advantages: rapid mixing from any starting distribution, and in the Gaussian volume algorithm, there would be no need for interleaving of sampling and volume phases, thus making the remaining analysis simpler, as in \cite{LV2}.
\end{enumerate}

\noindent
{\bf Acknowledgements.} We are grateful to Ronen Eldan and Mario Ullrich for helpful comments on an earlier draft.

\bibliography{acg}

\end{document}